\documentclass[runningheads]{llncs}
\usepackage[T1]{fontenc}
\usepackage{amsmath,amssymb,mathtools}
\usepackage{booktabs}
\usepackage{xcolor}
\usepackage{float}
\usepackage{microtype}
\usepackage[hidelinks]{hyperref}

\newcommand{\R}{\mathbb{R}}
\newcommand{\Pcal}{\mathcal{P}}

\newcommand{\e}{\mathbf{e}}
\newcommand{\dd}{\,\mathrm{d}}
\newcommand{\norm}[1]{\lVert #1\rVert}
\newcommand{\N}{\mathcal{N}}

\newcommand{\KL}{\operatorname{KL}}

\title{Compositional Boundaries for Density Fusion\thanks{To appear in the Proceedings of the 17th International Conference on Scalable Uncertainty Management (SUM 2026), LNAI, Athens, Greece}}
\titlerunning{Compositional Boundaries for Density Fusion}

\author{Ratan Bahadur Thapa\inst{1}\thanks{Corresponding author: ratan.thapa@ki.uni-stuttgart.de} \and Ali Darijani\inst{3,4} \and Jürgen Beyerer\inst{3,4} \and \\ Steffen Staab\inst{1,2}}
\authorrunning{Thapa et al.}
\institute{Institute for Artificial Intelligence, University of Stuttgart, Stuttgart, Germany\\
\and
 University of Southampton, United Kingdom\\
 \and
Institute for Anthropomatics and Robotics, Karlsruhe Institute of Technology, \\
 \and
Fraunhofer Institute of Optronics, System Technologies and Image Exploitation, Karlsruhe, Germany
}

\begin{document}
\maketitle
\begin{abstract}
Distributed uncertainty-management systems often combine local probabilistic
models along aggregation trees chosen by communication, privacy, or scheduling
constraints. The final density should depend on the weighted sources, not on the
particular order in which intermediate nodes combine them. We study this
requirement as an algebraic compositionality problem for binary fusion of
weighted probability densities. The central question is when a local fusion rule
can be executed hierarchically while remaining order-invariant. 
We establish a compositional boundary for local segment-valued fusion rules.
Within the class of continuous binary rules with additive output weights and
weight-only coefficients, order-invariant hierarchical execution characterizes
normalized weighted linear pooling; norm-induced segment balancing realizes the
corresponding coefficient. Smooth endpoint-to-candidate
$f$-divergence balancing has a different local geometry: its quadratic expansion
induces square-root effective weights, showing why pairwise solvability alone is
insufficient for schedule-independent fusion. We show that this obstruction is local to endpoint-to-candidate binary balancing,
whereas global divergence barycenters retain additive-weight local limits. 
Finally, Gaussian mixtures show how the same issue
appears in finite model classes: exact fusion is compositional, whereas stepwise
compression is compositional only under a congruence condition on unnormalized
component measures. These results distinguish exact schedule-independent fusion
from global aggregation objectives and local approximation heuristics.
\end{abstract}
\keywords{Uncertainty fusion \and Opinion pooling \and $f$-divergences \and Gaussian mixtures \and Distributed aggregation}

\section{Introduction}\label{sec:introduction}
Scalable uncertainty management often requires probabilistic predictions to be
combined across distributed sources. Edge devices estimate local densities,
hospitals train models under privacy constraints, and federated systems exchange
summaries rather than raw data. This makes aggregation more than probability
averaging: it requires fusion of probabilistic evidence with source-dependent
weights. We study weighted probability densities $(w,p)$, where $p$ is a
density and $w>0$ represents sample size, reliability, confidence, or priority.

The central semantic requirement is that the final density depend only on the
weighted sources, not on the aggregation tree. If the same weighted sources are combined along a left-deep tree, a balanced tree, or a dynamically scheduled aggregation tree, the final density should not change merely because the communication pattern changed. For states $x_1,x_2,x_3$, the computations $(x_1\oplus x_2)\oplus x_3$ and $x_1\oplus(x_2\oplus x_3)$ should have the same meaning; with permuted arrivals, $x_1\oplus x_2$ and $x_2\oplus x_1$ should agree as well. Thus, the binary fusion operation should form a commutative monoid on weighted density states. This requirement is stronger than pairwise solvability and stronger than closure of a model class under convex mixtures \cite{grabisch2009aggregation}.

A natural local protocol chooses a dissimilarity, solves a balanced binary fusion problem at each internal node, and repeats this step through the tree. Pairwise solvability is not enough: Kullback--Leibler (KL) segment balancing~\cite{kullback1951information} already gives different Bernoulli densities for the two parenthesizations of three equal-weight sources. The classical schedule-independent candidate is weighted linear pooling ~\cite{stone1961opinion,genest1986combining,jacobs1995methods},
\begin{equation}
  p_{\mathrm{pool}}=\frac{\sum_{i=1}^m w_i p_i}{\sum_{i=1}^m w_i},
  \label{eq:pooling-intro}
\end{equation}
which is standard in opinion aggregation and expert probability assessment. We do not study weighted linear pooling as a one-shot aggregate. Instead, we ask
when a binary fusion rule can be executed at arbitrary internal nodes of an
aggregation tree without making the final result schedule-dependent. Our results
therefore concern algebraic consistency of distributed execution, not statistical
optimality for a particular data-generating model.

Our main contribution is a compositional boundary result for local segment-based
fusion protocols. Norm-induced segment balancing realizes weighted linear
pooling. Conversely, any continuous segment-valued binary rule that propagates
additive weights, chooses its mixing coefficient only from the two input weights,
and is commutative and associative over all convex density classes with three
affinely independent densities must use the normalized weight coefficient.
Therefore, within this operational class, order-invariant hierarchical execution
forces weighted linear pooling. Associative alternatives
exist only after changing the additive quantity carried by the state, for example
by propagating transformed weights.

The main boundary concerns divergences. Under the local bounded-ratio assumptions used in Section~\ref{sec:divergence}, endpoint-to-candidate balancing for every twice continuously differentiable $f$-divergence with positive curvature at equality is locally quadratic along mixture segments, so its coefficient converges to $\sqrt{w_b}/(\sqrt{w_a}+\sqrt{w_b})$, not to $w_b/(w_a+w_b)$. We establish first-order tree-independence under square-root weight propagation and contrast it with global $f$-barycenter objectives, whose local quadratic limiting target is additive-weighted. Thus, the obstruction does not rule out global divergence barycenters, logarithmic pooling, Wasserstein barycenters, or endpoint-dependent optimization rules~\cite{kullback1951information,ali1966general,csiszar1967information}.

Gaussian mixtures give a finite model-class instance of the same issue. Exact weighted fusion stays inside the class of all finite Gaussian mixtures by concatenating components, while fixed-budget subclasses are generally not closed. Stepwise compression is order-invariant exactly when the idempotent compression map is compatible with unnormalized mixture addition, equivalently when it induces a monoid congruence.

The remainder of this paper is organized as follows: In Sections~\ref{sec:balance}--\ref{sec:gmm}, we develop these claims from
local balance to algebraic order-invariance, divergence barriers, and compressed
Gaussian-mixture fusion. In Section~\ref{sec:related}, we compare the results
with opinion pooling, evidence fusion~\cite{shafer1976mathematical},
barycenters~\cite{agueh2011barycenters}, and mixture
reduction~\cite{dempster1977maximum}.

\section{Segment Balance}\label{sec:balance}
Let $(\Omega,\mathcal F,\mu)$ be a measure space. A probability density
function, or PDF, is a measurable function $p:\Omega\to\R_{\geq0}$ with
$\int_\Omega p\dd\mu=1$, and PDFs are identified up to almost-everywhere
equality. A PDF class $\Pcal$ is a nonempty convex set of PDFs in an ambient
real vector space of measurable functions modulo almost-everywhere equality.
For $p,q\in\Pcal$ and $t\in[0,1]$, we write
\begin{equation}
  [p,q]_t=(1-t)p+tq .
  \label{eq:segment}
\end{equation}
A weighted PDF state is a pair $(w,p)$ with $w\in\R_{>0}$ and
$p\in\Pcal$. We also use a formal identity element $\e$ and set
$\widehat{\Pcal}=(\R_{>0}\times\Pcal)\cup\{\e\}$; the element $\e$ is not a
zero-weight density.

When a norm-induced distance is used, we assume that all differences
$p-q$, for $p,q\in\Pcal$, belong to a normed ambient vector space and have
finite norm. A norm-induced distance on $\Pcal$ is a dissimilarity
$\Delta(p,q)=\norm{p-q}$. A divergence on $\Pcal$ is a map
$D:\Pcal\times\Pcal\to\R_{\geq0}\cup\{\infty\}$ such that
\begin{equation}
 D(p,q)=0 \quad\text{if and only if}\quad p=q \text{ almost everywhere}.
 \label{eq:div-separation}
 \end{equation}
We do not require divergences to be symmetric or to
satisfy the triangle inequality.

\begin{definition} \label{def:segment-balance}
Let $\Pcal$ be a PDF class, let
$\Delta:\Pcal\times\Pcal\to\R_{\geq0}\cup\{\infty\}$, and let
$(w_a,p_a),(w_b,p_b)\in\R_{>0}\times\Pcal$. A parameter
$t^*\in[0,1]$ with finite endpoint dissimilarities is a segment-balancing point
for these two weighted states with respect to $\Delta$ if
\begin{equation}
  w_a\Delta(p_a,[p_a,p_b]_{t^*})
  =
  w_b\Delta(p_b,[p_a,p_b]_{t^*}).
  \label{eq:balance}
\end{equation}
The associated fused PDF is $[p_a,p_b]_{t^*}$.
\end{definition}

The definition restricts the binary output to the mixture segment and balances
the weighted dissimilarities to the two endpoints. For norm-induced distances,
this local condition already determines the mixing coefficient.

\begin{theorem}
\label{thm:norm-characterization}
Let $\Delta(p,q)=\norm{p-q}$ be norm-induced. If $p_a\neq p_b$, then the unique segment-balancing point is
\begin{equation}
  t^*=\frac{w_b}{w_a+w_b}
  \quad\text{and}\quad
  [p_a,p_b]_{t^*}
  =
  \frac{w_a}{w_a+w_b}p_a+
  \frac{w_b}{w_a+w_b}p_b.
  \label{eq:norm-solution}
\end{equation}
\end{theorem}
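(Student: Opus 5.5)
The plan is to compute both endpoint distances in closed form along the segment and reduce the balance condition \eqref{eq:balance} to a linear equation in $t$.

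For $t\in[0,1]$ we have
\[
p_a-[p_a,p_b]_t = t(p_a-p_b),
\qquad
p_b-[p_a,p_b]_t=(1-t)(p_b-p_a).
\]
By absolute homogeneity of the norm, together with $\norm{p_a-p_b}=\norm{p_b-p_a}$, this gives
\[
\Delta(p_a,[p_a,p_b]_t)=t\,\norm{p_a-p_b}
\quad\text{and}\quad
\Delta(p_b,[p_a,p_b]_t)=(1-t)\,\norm{p_a-p_b}.
\]
Both quantities are finite because the standing assumption says differences of elements of $\Pcal$ have finite norm. So every $t\in[0,1]$ meets the finiteness requirement of Definition~\ref{def:segment-balance}, and $[p_a,p_b]_t\in\Pcal$ by convexity.

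Writing $c=\norm{p_a-p_b}$, the balance condition becomes $w_a t c = w_b(1-t)c$. The hypothesis $p_a\neq p_b$ (almost everywhere), combined with positive definiteness of the norm on the ambient quotient space, gives $c>0$. Dividing by $c$ leaves the linear equation $(w_a+w_b)t=w_b$. Since $w_a+w_b>0$, this has the single solution $t^*=w_b/(w_a+w_b)$, which lies in $(0,1)$ because both weights are positive. That settles existence and uniqueness at once.

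Substituting $t^*$ into \eqref{eq:segment} gives $1-t^*=w_a/(w_a+w_b)$, and hence the stated fused PDF in \eqref{eq:norm-solution}.

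The only point needing care is the use of $c>0$. If $p_a=p_b$, every $t$ would balance, so the hypothesis is exactly what forces uniqueness. This relies on the norm being a genuine norm on equivalence classes modulo almost-everywhere equality, not a seminorm, and I would state that explicitly. Everything else is routine algebra.
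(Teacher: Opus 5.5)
Your proposal is correct and follows essentially the same route as the paper: closed-form endpoint distances via absolute homogeneity, then cancellation of the positive factor $\norm{p_a-p_b}$ (positive because $p_a\neq p_b$ in the almost-everywhere quotient), then uniqueness from the linear equation $(w_a+w_b)t=w_b$. Your explicit checks that $t^*\in(0,1)$ and that every segment point satisfies the finiteness requirement of Definition~\ref{def:segment-balance} are minor additions the paper leaves implicit.
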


\begin{proof}
Let $p_t=[p_a,p_b]_t$. Then, $p_t-p_a=t(p_b-p_a)$ and $p_t-p_b=(1-t)(p_a-p_b)$, so absolute homogeneity gives $\norm{p_t-p_a}=t\norm{p_b-p_a}$ and $\norm{p_t-p_b}=(1-t)\norm{p_b-p_a}$. If $p_a\neq p_b$, cancellation of the positive common factor in the balance equation yields $w_at=w_b(1-t)$, thus~\eqref{eq:norm-solution}; uniqueness follows because the equation is linear in $t$. If $p_a=p_b$, all segment points are the same density.
\end{proof}

When $p_a=p_b$, every parameter produces $p_a$, so the fused density remains unique. The result depends on using a norm rather than its square. If $\Delta(p,q)=\norm{p-q}^2$, the same calculation gives $w_at^2=w_b(1-t)^2$ and hence $t^*=\sqrt{w_b}/(\sqrt{w_a}+\sqrt{w_b})$. 
The same square-root coefficient reappears for smooth divergences in
Section~\ref{sec:divergence}.

The norm-induced rule has the required compositional structure. On $\widehat{\Pcal}$, let $\oplus$ be the operation with $\e\oplus x=x\oplus\e=x$ and, for positive states,
\begin{equation}
  (w_a,p_a)\oplus(w_b,p_b) =
  \left(w_a+w_b,
  \frac{w_ap_a+w_bp_b}{w_a+w_b}
  \right).
  \label{eq:oplus}
\end{equation}

\begin{proposition}\label{prop:monoid}
Let $\eta$ be the map given by
\[\eta(\e)=(0,0)\quad\text{and}\quad \eta(w,p)=(w,wp)\]
for positive states $(w,p)$. Then, $\eta$ represents $\oplus$ by addition:
\begin{equation}
  \eta((w_a,p_a)\oplus(w_b,p_b))
  =
  \eta(w_a,p_a)+\eta(w_b,p_b).
  \label{eq:additive-representation}
\end{equation}
Consequently, $(\widehat{\Pcal},\oplus,\e)$ is a commutative monoid, and 
every nonempty finite family $(w_i,p_i)_{i=1}^m$ of positive weighted states
evaluates, independently of order and parenthesization, to
\begin{equation}
  \left(
  \sum_{i=1}^m w_i,
  \frac{\sum_{i=1}^m w_i p_i}{\sum_{i=1}^m w_i}
  \right).
  \label{eq:finite-fusion}
\end{equation}
\end{proposition}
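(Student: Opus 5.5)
The plan is to transport the whole problem into the additive monoid $\R_{\ge0}\times V$, where $V$ is the ambient vector space of measurable functions modulo a.e. equality. Everything then reduces to the commutativity and associativity of vector addition. The key observation is that $\eta$ is injective on $\widehat{\Pcal}$, and that its image is closed under addition.

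\textbf{Verifying \eqref{eq:additive-representation}.} We split into cases according to whether $\e$ occurs.
\begin{itemize}
\item If one argument is $\e$, the identity is immediate from $\e\oplus x=x$ and $\eta(\e)=(0,0)$.
\item For two positive states, the left side is $\bigl(w_a+w_b,(w_a+w_b)\cdot\frac{w_ap_a+w_bp_b}{w_a+w_b}\bigr)=(w_a+w_b,w_ap_a+w_bp_b)$, which is the right side.
\end{itemize}
We should also check that $\oplus$ is well defined on $\widehat{\Pcal}$. The weight $w_a+w_b$ is positive, and the density $\frac{w_ap_a+w_bp_b}{w_a+w_b}$ is a convex combination, so it lies in $\Pcal$ by convexity.

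\textbf{Injectivity of $\eta$.} Positive states map to pairs with first coordinate $w>0$, while $\e$ maps to first coordinate $0$. Moreover, $(w,p)$ is recovered from $(w,wp)$ as $p=(wp)/w$.

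\textbf{Transporting the monoid axioms.} For associativity, we compute
\[
\eta((x\oplus y)\oplus z)=\eta(x)+\eta(y)+\eta(z)=\eta(x\oplus(y\oplus z))
\]
using \eqref{eq:additive-representation} twice, and then injectivity gives equality. Commutativity follows the same way from the commutativity of $+$. The identity laws hold by the definition of $\oplus$.

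\textbf{The finite-family formula.} By induction on the number of binary operations, any parenthesization and ordering of $(w_i,p_i)_{i=1}^m$ yields an element whose $\eta$-image is $\sum_i\eta(w_i,p_i)=\bigl(\sum_i w_i,\sum_i w_ip_i\bigr)$. This element is a positive state, since its weight is positive. Inverting $\eta$ gives \eqref{eq:finite-fusion}. Alternatively, one can appeal to the general fact that, in a commutative monoid, finite products are independent of order and bracketing.

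\textbf{Main obstacle.} The only subtle point is bookkeeping rather than analysis. First, $\eta$ lands in a space where the image of $\widehat{\Pcal}$ must be closed under addition. Second, the convention $\eta(\e)=(0,0)$ must not collide with any positive state. Both are settled by the positivity of weights, which guarantees injectivity and lets us pull the identities back.
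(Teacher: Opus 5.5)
Your proposal is correct and follows the paper's proof. Like the paper, you verify the additive representation and then pull back commutativity, associativity and the finite-family formula from addition in the represented coordinates; your explicit handling of the $\e$ case and of the injectivity of $\eta$ spells out what the paper compresses into ``recovering the weighted state from the represented pair.''
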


\begin{proof}
For positive states, applying $\eta$ to~\eqref{eq:oplus} gives $(w_a+w_b,w_ap_a+w_bp_b)=\eta(w_a,p_a)+\eta(w_b,p_b)$. The identity laws hold by definition, closure follows from convexity of $\Pcal$, and commutativity and associativity follow from addition in the represented coordinates. Iteration gives~\eqref{eq:finite-fusion}.
\end{proof}

Proposition~\ref{prop:monoid} implies that each intermediate state need only carry the accumulated weight and weighted density, independently of the aggregation tree.

\section{Algebraic Characterization}\label{sec:algebra}

We prove a converse for the operational class introduced above. For binary
segment-valued rules with additive output weights and weight-only coefficients,
commutative and associative execution determines the mixing coefficient uniquely.

Consider a binary operation $\odot$ on weighted PDF states of the form
\begin{equation}
  (w_a,p_a)\odot(w_b,p_b)
  =
  \left(w_a+w_b,
  (1-\alpha(w_a,w_b))p_a+\alpha(w_a,w_b)p_b
  \right),
  \label{eq:general-rule}
\end{equation}
where $\alpha:\R_{>0}^2\to(0,1)$ is continuous and depends only on the two weights. The assumption that coefficients depend only on weights excludes endpoint-adaptive barycenters and other global optimization rules; those belong to a different design space. Affine independence means that no one of the three densities lies in the affine span of the other two; it is used below only to identify barycentric coefficients.

We use the following standard continuous cocycle representation for the additive semigroup~\cite{aczel2006lectures}.
\begin{lemma}
\label{lem:cocycle-representation}
Let $m:\R_{>0}^2\to(0,1)$ be continuous and satisfy
\begin{equation*}
  m(a,b)m(a+b,c)=m(a,b+c)
\end{equation*}
for all $a,b,c>0$. Then, there is a positive continuous function $F:\R_{>0}\to\R_{>0}$, unique up to multiplication by a positive constant, such that
\begin{equation*}
  m(a,b)=\frac{F(a)}{F(a+b)}.
\end{equation*}
\end{lemma}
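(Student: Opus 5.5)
The plan is to pass to ratios along the additive semigroup and build $F$ by fixing the first argument of $m$ at a base point. Write the cocycle identity as
\[
  m(a+b,c)=\frac{m(a,b+c)}{m(a,b)}\qquad(a,b,c>0),
\]
which is legitimate because $m$ takes values in $(0,1)$, so nothing vanishes.

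\textbf{Construction on a half-line.} Fix a base point $\varepsilon>0$ and set $F_\varepsilon(y)=1/m(\varepsilon,y-\varepsilon)$ for $y>\varepsilon$. This function is positive, and it is continuous because $m$ is.

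\textbf{Representation for $a>\varepsilon$.} Take $a>\varepsilon$ and put $b'=a-\varepsilon$. The identity above with first argument $\varepsilon$ reads
\[
  m(a,c)=\frac{m(\varepsilon,b'+c)}{m(\varepsilon,b')}=\frac{F_\varepsilon(a)}{F_\varepsilon(a+c)}
\]
for all $c>0$. So $m(a,b)=F_\varepsilon(a)/F_\varepsilon(a+b)$ holds whenever $a>\varepsilon$.

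\textbf{Uniqueness step, used twice.} Let $G,H$ be positive functions on an interval $(\delta,\infty)$ that both represent $m$ there. Then $G(a)/G(a+b)=H(a)/H(a+b)$ for all $a>\delta$ and $b>0$. Hence $G/H$ takes the same value at $a$ and at every point to its right. So $G/H$ is constant on $(\delta,\infty)$.

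\textbf{Patching.} Apply the uniqueness step to $F_{1/n}$ and $F_1$ on $(1,\infty)$. This gives constants $c_n>0$ with $c_nF_{1/n}=F_1$ there. Apply it again to $c_nF_{1/n}$ and $c_{n+1}F_{1/(n+1)}$ on $(1/n,\infty)$. Both represent $m$ there and agree on $(1,\infty)$, so the constant ratio equals $1$ and they coincide on all of $(1/n,\infty)$. Define $F(y)=c_nF_{1/n}(y)$ for any $n$ with $y>1/n$. This is well defined, positive and continuous, because each point lies in an open set on which $F$ equals a single continuous function.

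\textbf{Conclusion.} For any $a,b>0$, choose $n$ with $1/n<a$. Then $m(a,b)=F(a)/F(a+b)$ by the representation step. Uniqueness up to a positive constant is the uniqueness step applied on $(0,\infty)$.

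\textbf{Main obstacle.} The only delicate point is that no single base point covers every small $a$. The nested half-lines together with the uniqueness step handle this by making all the normalizations mutually compatible. I do not expect further difficulty.
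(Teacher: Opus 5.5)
Your proof is correct. Its core mechanism matches the paper's: fix the first argument of $m$ at a base point $r$ and read the potential off the second argument, $F(y)=1/m(r,y-r)$ for $y>r$. The paper writes this additively as $H(x)=h(r,x-r)$ with $h=-\log m$.

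The difference is in how you reach points to the left of the base point.

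The paper keeps a single base point $r$ and extends by an explicit formula, $H(x)=-h(x,r-x)$, i.e.\ $F(x)=m(x,r-x)$ for $x<r$. It then has to check the representation in three cases ($a>r$, $a+b<r$, $a<r<a+b$). It also needs a separate argument for continuity of $H$ at the junction point $r$: the boundary limit $h(a,b)\to0$ as $b\downarrow0$, obtained by letting $b\downarrow0$ in the cocycle identity.

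You only ever use the right-hand formula. You exhaust $\R_{>0}$ by the half-lines $(1/n,\infty)$ and glue the local potentials $F_{1/n}$ with the constant-ratio argument, so the uniqueness statement doubles as the gluing tool. This removes the case analysis and the boundary limit. Continuity of $F$ is automatic, since every point lies inside a half-line on which $F$ is one continuous function. It also makes plain that existence of the representation needs no continuity at all; continuity enters only through $y\mapsto m(1/n,y-1/n)$.

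The price is a countable normalization step instead of one closed-form potential. The paper's route gives an explicit global formula for $F$ from a single base point, which is slightly more concrete but needs the extra junction argument.
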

\begin{proof}
Let $h(a,b)=-\log m(a,b)$. Then $h$ is continuous and satisfies
$h(a,b)+h(a+b,c)=h(a,b+c)$. For fixed $a,c>0$, this identity implies
$h(a,b)=h(a,b+c)-h(a+b,c)$, and hence $h(a,b)\to0$ as $b\downarrow0$.
Choose $r>0$. Set $H(r)=0$, set $H(x)=h(r,x-r)$ for $x>r$, and set
$H(x)=-h(x,r-x)$ for $0<x<r$. The preceding boundary limit makes $H$
continuous at $r$. The cocycle identity then gives, in the three cases
$a>r$, $a+b<r$, and $a<r<a+b$, the representation
$h(a,b)=H(a+b)-H(a)$; the boundary cases follow by continuity. With
$F=\exp H$ we obtain $m(a,b)=F(a)/F(a+b)$. If $\widetilde F$ is another
positive continuous representative, then
$F(a)/\widetilde F(a)=F(a+b)/\widetilde F(a+b)$ for all $a,b>0$, so this
ratio is constant on $\R_{>0}$ and $F$ is unique up to a positive factor.
\end{proof}

\begin{theorem}
\label{thm:converse}
If $\odot$ is commutative and associative for all choices of PDFs in every convex PDF class containing at least three affinely independent densities, then
\begin{equation}
  \alpha(w_a,w_b)=\frac{w_b}{w_a+w_b}
  \label{eq:forced-alpha}
\end{equation}
for all $w_a,w_b>0$.
\end{theorem}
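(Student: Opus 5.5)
The plan is to extract two functional equations for $\alpha$, one from commutativity and one from associativity, and then solve them using Lemma~\ref{lem:cocycle-representation}. Write $\beta=1-\alpha$.

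\textbf{Commutativity.} Fix a convex class containing two distinct densities $p_a\neq p_b$; the class required in the statement contains three affinely independent ones. Commutativity gives
\[
(1-\alpha(w_a,w_b))p_a+\alpha(w_a,w_b)p_b=(1-\alpha(w_b,w_a))p_b+\alpha(w_b,w_a)p_a .
\]
Comparing the coefficients of $p_b-p_a$ yields $\alpha(w_a,w_b)=1-\alpha(w_b,w_a)$, that is, $\beta(a,b)=\alpha(b,a)$.

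\textbf{Associativity.} Take affinely independent $p_1,p_2,p_3$ with weights $a,b,c$ and expand both parenthesizations. The left one gives
\[
\beta(a+b,c)\beta(a,b)\,p_1+\beta(a+b,c)\alpha(a,b)\,p_2+\alpha(a+b,c)\,p_3 .
\]
The right one gives
\[
\beta(a,b+c)\,p_1+\alpha(a,b+c)\beta(b,c)\,p_2+\alpha(a,b+c)\alpha(b,c)\,p_3 .
\]
In each expression the coefficients sum to $1$. Affine independence therefore lets me identify barycentric coefficients; this is the only place that hypothesis is used. The $p_1$ coefficient gives the cocycle identity
\[
\beta(a,b)\beta(a+b,c)=\beta(a,b+c)
\]
for all $a,b,c>0$. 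Since $\beta$ is continuous with values in $(0,1)$, Lemma~\ref{lem:cocycle-representation} gives $\beta(a,b)=F(a)/F(a+b)$ for some positive continuous $F$.

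\textbf{Combining the two.} Commutativity now reads $\alpha(a,b)=\beta(b,a)=F(b)/F(a+b)$. Since $\alpha+\beta=1$,
\[
F(a)+F(b)=F(a+b)
\]
for all $a,b>0$. Continuity of $F$, or even just its positivity, which makes $F$ bounded below on intervals, forces the Cauchy solution $F(x)=kx$ with $k>0$. To justify this, extend $F$ to an odd additive function on $\R$ and invoke the standard regularity result. Substituting back gives $\alpha(a,b)=b/(a+b)$, which is \eqref{eq:forced-alpha}.

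\textbf{Main obstacle.} I expect the main care to be needed in the coefficient-identification step, which must use affine independence properly. The density coefficients form a convex combination, so linear independence is not needed; it suffices that barycentric coordinates relative to three affinely independent points are unique. The Cauchy step is routine.
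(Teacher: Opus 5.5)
Your proposal is correct and follows the paper's proof essentially step for step. As in the paper, the $p_1$-coefficient of the two parenthesizations gives the cocycle identity for $\beta=1-\alpha$, Lemma~\ref{lem:cocycle-representation} yields $\beta(a,b)=F(a)/F(a+b)$, commutativity turns $\beta(a,b)+\beta(b,a)=1$ into $F(a+b)=F(a)+F(b)$, and the Cauchy step forces linearity (your observation that positivity alone suffices there is a harmless refinement of the paper's appeal to continuity).
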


\begin{proof}
Let $m(a,b)=1-\alpha(a,b)$. Expanding
$((a,p_a)\odot(b,p_b))\odot(c,p_c)$ gives the coefficient of $p_a$ as
$m(a,b)m(a+b,c)$, whereas expanding
$(a,p_a)\odot((b,p_b)\odot(c,p_c))$ gives the coefficient of $p_a$ as
$m(a,b+c)$. Since the three densities are affinely independent, equality of
the two parenthesizations implies
$m(a,b)m(a+b,c)=m(a,b+c)$. Lemma~\ref{lem:cocycle-representation} yields
$m(a,b)=F(a)/F(a+b)$ for a positive continuous $F$. Commutativity gives
$m(a,b)+m(b,a)=1$, hence $F(a+b)=F(a)+F(b)$; continuity gives
$F(w)=cw$ on $\R_{>0}$. Therefore $m(a,b)=a/(a+b)$ and
$\alpha(a,b)=b/(a+b)$.
\end{proof}

The quantification over convex PDF classes with three affinely independent
densities is used only to identify barycentric coefficients in the associativity
argument. Theorem~\ref{thm:converse} is not a uniqueness theorem for all
probabilistic aggregation rules. It applies only to continuous segment-valued
binary rules with additive output weights and weight-only coefficients.
Non-segment-valued rules, logarithmic pooling, endpoint-dependent barycenters,
and global optimization principles fall outside these assumptions.
The theorem identifies the point at which hierarchical execution itself forces the normalized linear-pooling coefficient.

The additive-weight assumption is sharp: supplied weights may first be
transformed into effective weights, which are then propagated additively.

\begin{proposition}\label{prop:effective-weight}
Let $g:\R_{>0}\to\R_{>0}$ be a continuous, strictly increasing bijection. On
$\widehat{\Pcal}$, let $\oplus_g$ be the operation with
$\e\oplus_g x=x\oplus_g\e=x$ for all $x\in\widehat{\Pcal}$ and, for positive states,
\begin{equation}
  (a,p)\oplus_g(b,q) =
  \left(
  g^{-1}(g(a)+g(b)),
  \frac{g(a)p+g(b)q}{g(a)+g(b)}
  \right).
  \label{eq:effective-weight-rule}
\end{equation}
Then, $(\widehat{\Pcal},\oplus_g,\e)$ is a commutative monoid, and every
finite family $(w_i,p_i)_{i=1}^m$ of positive weighted states evaluates to
\begin{equation}
  \left(
  g^{-1}\!\left(\sum_{i=1}^m g(w_i)\right),
  \frac{\sum_{i=1}^m g(w_i)p_i}{\sum_{i=1}^m g(w_i)}
  \right).
  \label{eq:effective-weight-finite}
\end{equation}
\end{proposition}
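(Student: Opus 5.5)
The plan is to mirror the proof of Proposition~\ref{prop:monoid}: find an injective map into an additive commutative monoid under which $\oplus_g$ becomes coordinatewise addition. I would define $\eta_g(\e)=(0,0)$ and $\eta_g(a,p)=(g(a),g(a)p)$ for positive states, with values in $\R_{\ge0}$ times the ambient vector space. Writing $s=g(a)+g(b)$, the output weight of~\eqref{eq:effective-weight-rule} is $g^{-1}(s)$. Hence $g$ of the output weight equals $s$, and the $\eta_g$-image of the output is
\[
\Bigl(s,\ s\cdot \tfrac{g(a)p+g(b)q}{s}\Bigr)=(g(a)+g(b),\,g(a)p+g(b)q)=\eta_g(a,p)+\eta_g(b,q).
\]
The cases involving $\e$ hold trivially because $\eta_g(\e)=(0,0)$. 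Well-definedness needs $g^{-1}(s)$ to exist. This holds because $g$ is a bijection of $\R_{>0}$ onto $\R_{>0}$ and $s>0$. Closure in $\Pcal$ holds because the output density is a convex combination with coefficients $g(a)/s,g(b)/s\in(0,1)$ and $\Pcal$ is convex.

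The key step is injectivity of $\eta_g$, since that is what transfers commutativity and associativity back from the target monoid. A positive state has first coordinate $g(a)>0$, whereas $\eta_g(\e)$ has first coordinate $0$, so $\e$ is separated from every positive state. On positive states, $g(a)$ determines $a$ because $g$ is strictly increasing. Dividing the second coordinate by $g(a)>0$ then recovers $p$. The image of $\eta_g$ is closed under addition by the identity above. Commutativity and associativity of $\oplus_g$ therefore follow: the two sides of each law have equal images under $\eta_g$ (addition in the target is commutative and associative), and injectivity gives equality. The identity laws hold by definition. This makes $(\widehat{\Pcal},\oplus_g,\e)$ a commutative monoid.

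For the finite formula~\eqref{eq:effective-weight-finite}, I would use induction on $m$, or equivalently note that any parenthesization of the family maps under $\eta_g$ to $\bigl(\sum_i g(w_i),\sum_i g(w_i)p_i\bigr)$. Inverting $\eta_g$ on this value gives weight $g^{-1}\bigl(\sum_i g(w_i)\bigr)$ and density $\sum_i g(w_i)p_i/\sum_i g(w_i)$, as claimed. I do not expect a genuine obstacle. The only points needing care are that $g^{-1}$ is defined on all of $\R_{>0}$, which is why the bijection hypothesis is used, and the injectivity check. Continuity of $g$ is not needed for the algebra; it only makes the rule continuous.
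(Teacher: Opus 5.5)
Your proposal is correct and is essentially the paper's proof: the paper uses the same embedding $\eta_g(w,p)=(g(w),g(w)p)$ with $\eta_g(\e)=(0,0)$, shows that it turns $\oplus_g$ into addition, and recovers a state from $(s,r)$ with $s>0$ as $(g^{-1}(s),r/s)$; this transfers the monoid laws and the finite formula. Your explicit injectivity check, and your remark that continuity of $g$ is not used in the algebra, make the same argument more precise.
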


\begin{proof}
The map $\eta_g(w,p)=(g(w),g(w)p)$, with $\eta_g(\e)=(0,0)$, represents $\oplus_g$ by addition. Since $g$ is a bijection, a represented pair $(s,r)$ with $s>0$ recovers the state as $(g^{-1}(s),r/s)$. Thus, commutativity, associativity, identity, and the finite formula~\eqref{eq:effective-weight-finite} follow from addition.
\end{proof}

With $g(w)=w$, Proposition~\ref{prop:effective-weight} recovers ordinary weighted linear pooling. With $g(w)=\sqrt w$, the effective contribution of a source is proportional to the square root of its supplied weight, and the accumulated output weight is $(\sum_i\sqrt{w_i})^2$. In Proposition~\ref{prop:effective-weight}, the coefficient is weight-only, but it is normalized in the transformed weights $g(a),g(b)$, not in the additively propagated supplied masses $a,b$. Thus, different associative segment rules are possible only after changing which numerical quantity the distributed computation treats as additive.

When weights represent sample masses, weighted linear pooling has the usual source-mixture interpretation: if $\Pr(S=i)=w_i/\sum_jw_j$ and $X\mid S=i$ has density $p_i$, then the marginal density of $X$ is $\sum_iw_ip_i/\sum_iw_i$. Reliability or priority weights require domain-specific calibration, but the algebraic condition for order-invariant execution is unchanged.

\section{Divergence Barriers}\label{sec:divergence}

We next show why smooth information divergences change the compositional behavior. Endpoint-to-candidate segment balancing compares $D(p_a\Vert p_t)$ with $D(p_b\Vert p_t)$ for a segment point $p_t=[p_a,p_b]_t$. Such equations can be solvable pairwise, but they do not generally provide an associative binary operation under additive source weights.

Let $D$ be a divergence satisfying~\eqref{eq:div-separation}. For two weighted states $(w_a,p_a)$ and $(w_b,p_b)$, let
\begin{equation}
  \Phi(t)=w_aD(p_a,[p_a,p_b]_t)-w_bD(p_b,[p_a,p_b]_t).
  \label{eq:phi}
\end{equation}
If $p_a\neq p_b$ and, for $p\in\{p_a,p_b\}$, each scalar map $t\mapsto D(p,[p_a,p_b]_t)$ is finite and continuous on $[0,1]$, then $\Phi(0)<0$ and $\Phi(1)>0$. The intermediate value theorem gives a root in $(0,1)$. This observation proves pairwise existence only; it does not give uniqueness, a closed form, or associativity.

We now prove the obstruction for $f$-divergences. Let $f:(0,\infty)\to\R$ be convex with $f(1)=0$. The associated $f$-divergence is
\begin{equation}
  D_f(p\Vert q)=\int_\Omega q\,f\!\left(\frac{p}{q}\right)\dd\mu,
  \label{eq:f-divergence}
\end{equation}
whenever the integral is defined, with the usual extended-value conventions at points where $q=0$. When we use $D_f$ as a divergence in the sense of~\eqref{eq:div-separation}, we restrict to choices of $f$ and PDF classes for which $D_f(p\Vert q)=0$ implies $p=q$ almost everywhere. The local results below require only the stated bounded-ratio assumptions and the strict quadratic behavior given by $f''(1)>0$. Adding a term proportional to $u-1$ to $f(u)$ does not change~\eqref{eq:f-divergence}, since $\int q(p/q-1)\dd\mu=\int p\dd\mu-\int q\dd\mu=0$; hence, after replacing $f(u)$ by $f(u)-f'(1)(u-1)$, we may assume $f'(1)=0$ in the local argument. Under the local bounded-ratio assumptions used below, the argument applies to the standard $f$-divergence generators for KL, reverse KL, Hellinger-type divergences, Pearson and Neyman chi-square divergences, and Jensen--Shannon divergence~\cite{kullback1951information,ali1966general,csiszar1967information,lin1991divergence}.

\begin{theorem}
\label{thm:f-local-square-root}
Let $f$ be twice continuously differentiable in a neighborhood of $1$, with
$f(1)=f'(1)=0$ and $f''(1)>0$. Let $p$ be a PDF that is positive almost everywhere, and let $h$ be measurable with
$\int h\dd\mu=0$, $\int h^2/p\dd\mu\in(0,\infty)$, and $h/p$
essentially bounded. For sufficiently small $\varepsilon>0$, let
\[q_\varepsilon=p+\varepsilon h
  \quad\text{and}\quad
  p_{\varepsilon,t}=[p,q_\varepsilon]_t=p+t\varepsilon h .
\]
For fixed $a,b>0$, the balance equation
\begin{equation}
  aD_f(p\Vert p_{\varepsilon,t})
  =
  bD_f(q_\varepsilon\Vert p_{\varepsilon,t})
  \label{eq:f-balance-local}
\end{equation}
has a solution $t\in(0,1)$ for all sufficiently small $\varepsilon$. If $t_\varepsilon\in[0,1]$ is any selected solution, then
\begin{equation}
  t_\varepsilon\longrightarrow
  \frac{\sqrt b}{\sqrt a+\sqrt b}
  \qquad (\varepsilon\downarrow0).
  \label{eq:sqrt-local-limit}
\end{equation}
\end{theorem}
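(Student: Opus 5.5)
The plan is to expand both endpoint divergences to second order in $\varepsilon$, uniformly in $t\in[0,1]$. This reduces the balance equation~\eqref{eq:f-balance-local} to the quadratic balance $at^2=b(1-t)^2$ already met after Theorem~\ref{thm:norm-characterization}, up to an $o(1)$ error.

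The expansion step is as follows. Write $M=\operatorname{ess\,sup}|h/p|$ and $\rho_{\varepsilon,t}=1+t\varepsilon h/p$, so that $p_{\varepsilon,t}=p\,\rho_{\varepsilon,t}$ and $\rho_{\varepsilon,t}\in[1-\varepsilon M,1+\varepsilon M]$ almost everywhere. This keeps $p_{\varepsilon,t}$ a PDF for $\varepsilon M<1$.

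For the first endpoint, the ratio is $p/p_{\varepsilon,t}=1/\rho_{\varepsilon,t}$. Its deviation from $1$ is $u_1=-t\varepsilon (h/p)/\rho_{\varepsilon,t}$, bounded by $C\varepsilon$.

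For the second endpoint, $q_\varepsilon/p_{\varepsilon,t}=(\rho_{\varepsilon,t}+(1-t)\varepsilon h/p)/\rho_{\varepsilon,t}$. Its deviation is $u_2=(1-t)\varepsilon (h/p)/\rho_{\varepsilon,t}$.

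Taylor's theorem with $f(1)=f'(1)=0$ and $f''$ continuous near $1$ gives $f(1+u)=\tfrac12 f''(1)u^2+r(u)u^2$ with $r(u)\to0$ as $u\to0$. Since $|u_i|\le C\varepsilon$ uniformly in $t$ and $\omega$, we have $\sup|r(u_i)|\to0$. Integrating against $p_{\varepsilon,t}=p\rho_{\varepsilon,t}$ then yields, with $I=\int h^2/p\dd\mu\in(0,\infty)$,
\begin{equation*}
  D_f(p\Vert p_{\varepsilon,t})=\tfrac12 f''(1)I\,\varepsilon^2t^2\,(1+\delta_1(\varepsilon,t)),\qquad
  D_f(q_\varepsilon\Vert p_{\varepsilon,t})=\tfrac12 f''(1)I\,\varepsilon^2(1-t)^2\,(1+\delta_2(\varepsilon,t)).
\end{equation*}
Here $\sup_t|\delta_i|\to0$. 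The $\delta$'s collect the factors $1/\rho$, $\rho$ and $r$, each of which is $1+O(\varepsilon)$ or $o(1)$ uniformly. All integrals are finite because the integrands are dominated by a constant times $\varepsilon^2 h^2/p$.

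Now divide~\eqref{eq:f-balance-local} by $\tfrac12 f''(1)I\varepsilon^2>0$ and set
\begin{equation*}
  \Psi_\varepsilon(t)=at^2(1+\delta_1)-b(1-t)^2(1+\delta_2),
\end{equation*}
which converges uniformly on $[0,1]$ to $\Psi_0(t)=at^2-b(1-t)^2$.

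For existence, $\Phi(0)=-bD_f(q_\varepsilon\Vert p)<0$ and $\Phi(1)=aD_f(p\Vert q_\varepsilon)>0$ for small $\varepsilon$, directly from the expansion. Continuity in $t$ follows by dominated convergence under the uniform bounds. The intermediate value theorem then gives a root in $(0,1)$, as noted after~\eqref{eq:phi}.

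For the limit, $\Psi_0$ is strictly increasing on $[0,1]$, with unique zero $t_0=\sqrt b/(\sqrt a+\sqrt b)$. So for any $\eta>0$, $|\Psi_0|\ge c_\eta>0$ on $\{t\in[0,1]:|t-t_0|\ge\eta\}$. Uniform convergence then forces every root $t_\varepsilon$ of $\Psi_\varepsilon$ to satisfy $|t_\varepsilon-t_0|<\eta$ for small $\varepsilon$, which is~\eqref{eq:sqrt-local-limit}. Roots at the endpoints are excluded by the signs at $0$ and $1$.

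The main obstacle is the uniformity in $t$ of the remainder. Near $t=0$ the leading term $t^2$ of the first divergence vanishes, so I will keep the remainder in the relative form $u^2 r(u)$ rather than as an absolute $O(\varepsilon^3)$ term. This relative form needs only continuity of $f''$, not a third derivative. It also keeps the relative error $\delta_i$ uniformly small, which is exactly what the uniform-convergence argument needs.
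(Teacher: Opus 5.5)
Your proposal is correct and follows the paper's proof essentially step for step: a uniform second-order Taylor expansion of both endpoint divergences in the bounded-ratio regime, the intermediate value theorem using the signs at $t=0$ and $t=1$, and reduction of the balance equation to $a\tau^2=b(1-\tau)^2$. The differences are cosmetic. You keep the remainder in relative form and conclude via uniform convergence of $\Psi_\varepsilon$ to the strictly increasing $\Psi_0$, while the paper uses an additive $o(\varepsilon^2)$ remainder, uniform in $t$, and a subsequence-compactness argument; that additive form already gives uniform convergence after dividing by $\varepsilon^2$, so your relative form is a convenience rather than a necessity.
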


\begin{proof}
Let $u=h/p$ and $I=\int u^2p\dd\mu\in(0,\infty)$. The bounded-ratio assumptions keep all likelihood ratios entering $f$ in a compact neighborhood of $1$ for small $\varepsilon$, so the endpoint maps are continuous. Since $I>0$ and $f''(1)>0$, the endpoint divergences are positive for all sufficiently small $\varepsilon$, and the balance function changes sign; hence a balancing root exists by the intermediate value theorem. Uniform Taylor expansion at $1$, with remainders uniform in $t\in[0,1]$, gives
\begin{equation*}
D_f(p\Vert p_{\varepsilon,t})=\frac{f''(1)}{2}t^2\varepsilon^2I+o(\varepsilon^2)\quad\text{and}\quad
D_f(q_\varepsilon\Vert p_{\varepsilon,t})=\frac{f''(1)}{2}(1-t)^2\varepsilon^2I+o(\varepsilon^2).
\end{equation*}
Dividing the balance equation by $(f''(1)/2)\varepsilon^2I$ and passing to subsequential limits gives $a\tau^2=b(1-\tau)^2$, whose unique solution in $[0,1]$ is~\eqref{eq:sqrt-local-limit}.
\end{proof}

Theorem~\ref{thm:f-local-square-root} gives the endpoint-to-candidate divergence barrier. Norm-induced balance gives $b/(a+b)$ because endpoint distance grows linearly along the segment. Smooth $f$-divergence balance gives $\sqrt b/(\sqrt a+\sqrt b)$ locally because the divergence grows quadratically. Thus, all such divergences agree  with the transformed-weight monoid of Proposition~\ref{prop:effective-weight} for $g(w)=\sqrt w$, not with additive supplied weights.

The same observation yields a positive first-order result when the propagated
mass follows the square-root law. For a binary aggregation tree $T$ with leaves
labeled by weighted states $(w_i,p_i^\varepsilon)$, we use the following
square-root propagated $D_f$-balancing protocol. At each internal node, let
$(W_A,r_A^\varepsilon)$ and $(W_B,r_B^\varepsilon)$ be the two child states.
Choose a root of the endpoint-to-candidate $D_f$-balance equation for
$r_A^\varepsilon$ and $r_B^\varepsilon$. The output density is the
corresponding segment point, and the output weight is
\[ W_{A\cup B}=(\sqrt{W_A}+\sqrt{W_B})^2 .
\]

\begin{corollary}
\label{cor:sqrt-first-order}
Let $f$ satisfy the assumptions of Theorem~\ref{thm:f-local-square-root}, and
let $p$ be a PDF that is positive almost everywhere. For $i=1,\ldots,m$, let
$p_i^\varepsilon=p+\varepsilon h_i$, where $\int h_i\dd\mu=0$, $h_i/p$
is essentially bounded, and $p_i^\varepsilon$ is a PDF for all sufficiently
small $\varepsilon>0$. Let $w_1,\ldots,w_m>0$.

For every binary aggregation tree $T$ with leaves labeled by
$(w_i,p_i^\varepsilon)$, the square-root propagated $D_f$-balancing protocol
has roots at all internal nodes for sufficiently small $\varepsilon$, and its
output density satisfies
\begin{equation}
  p^\varepsilon_{\mathrm{out}}
  =
  p+\varepsilon\,
  \frac{\sum_{i=1}^m \sqrt{w_i}\,h_i}{\sum_{i=1}^m\sqrt{w_i}}
  +o(\varepsilon),
  \label{eq:first-order-square-root}
\end{equation}
where the remainder is understood in the finite-dimensional perturbation space
$\operatorname{span}\{h_1,\ldots,h_m\}$ with norm
$\|r\|_{p,\infty}=\|r/p\|_\infty$. Thus the first-order term is
independent of the aggregation tree.
\end{corollary}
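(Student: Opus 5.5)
We proceed by structural induction on the binary aggregation tree $T$ and prove a stronger, node-level statement. For every subtree $S$ with leaf set $L(S)$, the state produced at its root has weight $W_S=\bigl(\sum_{i\in L(S)}\sqrt{w_i}\bigr)^2$, which is exact and independent of $\varepsilon$. Its density has the form $r_S^\varepsilon=p+\varepsilon g_S(\varepsilon)$, where $g_S(\varepsilon)\in\operatorname{span}\{h_1,\dots,h_m\}$, $\int g_S(\varepsilon)\dd\mu=0$, and
\[
g_S(\varepsilon)\longrightarrow \bar h_S=\frac{\sum_{i\in L(S)}\sqrt{w_i}\,h_i}{\sum_{i\in L(S)}\sqrt{w_i}}\quad\text{in }\|\cdot\|_{p,\infty}.
\]
The weight claim is immediate from $W_{A\cup B}=(\sqrt{W_A}+\sqrt{W_B})^2$, so $\sqrt{W_S}$ is additive over leaves. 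At a leaf, $g_S\equiv h_i$ and nothing needs proof. Membership in the span and zero total integral are preserved by segment points. So only the density statement needs induction, and applying it at the root of $T$ gives~\eqref{eq:first-order-square-root}.

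For the inductive step, take children $A,B$ with $r_A^\varepsilon=p+\varepsilon g_A$ and $r_B^\varepsilon=p+\varepsilon g_B$. The segment between them is $r_A^\varepsilon+t\varepsilon(g_B-g_A)$. Any chosen root $t_\varepsilon$ gives the output $p+\varepsilon\bigl((1-t_\varepsilon)g_A+t_\varepsilon g_B\bigr)$. If we show $t_\varepsilon\to\tau:=\sqrt{W_B}/(\sqrt{W_A}+\sqrt{W_B})$, then $(1-t_\varepsilon)g_A+t_\varepsilon g_B\to(1-\tau)\bar h_A+\tau\bar h_B=\bar h_{A\cup B}$, because $\sqrt{W_A}=\sum_{i\in L(A)}\sqrt{w_i}$ and likewise for $B$. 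This is exactly the square-root weighted average over $L(A)\cup L(B)$.

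The main obstacle is that Theorem~\ref{thm:f-local-square-root} is stated for a fixed base density $p$ and a fixed direction $h$. Here the base point $r_A^\varepsilon$ and the direction $g_B(\varepsilon)-g_A(\varepsilon)$ both move with $\varepsilon$. I would redo its Taylor argument uniformly rather than cite it verbatim. Because the span is finite-dimensional and every $g_S(\varepsilon)$ is bounded in $\|\cdot\|_{p,\infty}$, all likelihood ratios $r_A^\varepsilon/p_{\varepsilon,t}$ and $r_B^\varepsilon/p_{\varepsilon,t}$ equal $1+O(\varepsilon)$ uniformly in $t$. The uniform second-order expansion of $f$ at $1$ then gives
\[
D_f(r_A^\varepsilon\Vert p_{\varepsilon,t})=\tfrac{f''(1)}{2}t^2\varepsilon^2 I_\varepsilon+o(\varepsilon^2),\qquad
D_f(r_B^\varepsilon\Vert p_{\varepsilon,t})=\tfrac{f''(1)}{2}(1-t)^2\varepsilon^2 I_\varepsilon+o(\varepsilon^2),
\]
uniformly in $t$, with $I_\varepsilon=\int (g_B-g_A)^2/p\dd\mu\to I=\int(\bar h_B-\bar h_A)^2/p\dd\mu$. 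Replacing the denominator $p_{\varepsilon,t}$ by $p$ inside the quadratic term costs only $O(\varepsilon^3)$.

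When $I>0$, the argument then follows Theorem~\ref{thm:f-local-square-root}. Root existence comes from the sign change via the intermediate value theorem, and subsequential limits of $t_\varepsilon$ satisfy $W_A\tau^2=W_B(1-\tau)^2$. This holds because the protocol balances with the propagated weights $W_A,W_B$, which have square roots $\sqrt{W_A},\sqrt{W_B}$, and it gives the claimed $\tau$.

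The degenerate case $I=0$, meaning $\bar h_A=\bar h_B$, needs separate care. If $r_A^\varepsilon=r_B^\varepsilon$, every $t$ is a root and the output is unchanged. Otherwise the roots still exist by the sign change, and $t_\varepsilon$ need not converge. But then $\|g_B-g_A\|_{p,\infty}\to0$, so $(1-t_\varepsilon)g_A+t_\varepsilon g_B\to\bar h_A=\bar h_{A\cup B}$ regardless of $t_\varepsilon\in[0,1]$. In both cases the inductive claim passes to $A\cup B$. Positivity of the intermediate densities for small $\varepsilon$ follows from boundedness of $g_S/p$.
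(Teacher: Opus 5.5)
Your proposal is correct and follows essentially the same route as the paper's proof. Both run an induction over subtrees that carries the exact weight $\bigl(\sum_{i\in A}\sqrt{w_i}\bigr)^2$ and a density $p+\varepsilon H_A^\varepsilon$ with $H_A^\varepsilon\to H_A$ in $\|\cdot\|_{p,\infty}$. Both redo the Taylor expansion of Theorem~\ref{thm:f-local-square-root} uniformly for the moving endpoints to force the coefficient limit $\sqrt{W_B}/(\sqrt{W_A}+\sqrt{W_B})$ when $H_A\neq H_B$, and both note that every coefficient gives the same first-order term when $H_A=H_B$. Your sign-change argument for root existence in that degenerate case spells out a step the paper only asserts; it is valid because $f>0$ on a punctured neighborhood of $1$, so the endpoint divergences are positive whenever $r_A^\varepsilon\neq r_B^\varepsilon$.
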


\begin{proof}
For a subtree with leaf set $A$, let
$G_A=\sum_{i\in A}\sqrt{w_i}$. We specify $H_A$ by
$G_AH_A=\sum_{i\in A}\sqrt{w_i}h_i$. Induction on the tree shows that
the subtree has propagated weight $G_A^2$ and density
$p+\varepsilon H_A+o(\varepsilon)$. All $o(\varepsilon)$ terms are taken in
the finite-dimensional span generated by the $h_i$; equivalently, after
division by $p$, the corresponding remainders are uniformly bounded and
converge to zero in that span. Since intermediate densities are convex
combinations of the leaf densities, the bounded-ratio assumptions persist, so
the local balance equation has a root for sufficiently small $\varepsilon$. At
a parent with child leaf sets $A,B$, the two-perturbation Taylor expansion of
the local balance equation gives segment coefficient
$G_B/(G_A+G_B)+o(1)$ unless $H_A=H_B$, in which case every coefficient gives
the same first-order term. The parent density is therefore
$p+\varepsilon(G_AH_A+G_BH_B)/(G_A+G_B)+o(\varepsilon)
=p+\varepsilon H_{A\cup B}+o(\varepsilon)$, and the propagated weight is
$(G_A+G_B)^2=G_{A\cup B}^2$. At the root this gives
\eqref{eq:first-order-square-root}, independently of the aggregation tree.
\end{proof}

Corollary~\ref{cor:sqrt-first-order} does not restore order-invariance for
additive source weights. It identifies a first-order local repair: under square-root propagated weights, smooth $f$-divergence segment balancing is tree-independent to first order in the local perturbation parameter, and exact associativity still requires propagating the transformed effective weight required by Proposition~\ref{prop:effective-weight}. The next result
shows that this is not a criticism of divergence objectives in general. For the forward objective considered below, global barycenter objectives have a different local limit.

\begin{proposition}
\label{prop:global-barycenter-local}
Let $f$ and $p$ satisfy the assumptions of
Theorem~\ref{thm:f-local-square-root}. For $i=1,\ldots,m$, let
$p_i^\varepsilon=p+\varepsilon h_i$, where $\int h_i\dd\mu=0$, $h_i/p$
is essentially bounded, and $\int h_i^2/p\dd\mu<\infty$. Let
$w_1,\ldots,w_m>0$. Let
$q_k^\varepsilon=p+\varepsilon k$, where $k$ has the same zero-integral,
bounded-ratio, and square-integrability properties, and assume
$q_k^\varepsilon$ is positive for sufficiently small $\varepsilon$. Then,
\begin{equation}
  \sum_{i=1}^m w_iD_f(p_i^\varepsilon\Vert q_k^\varepsilon)
    = \frac{f''(1)}{2}\varepsilon^2
  \sum_{i=1}^m w_i\int_\Omega\frac{(h_i-k)^2}{p}\dd\mu
  +o(\varepsilon^2).
  \label{eq:global-bary-local}
\end{equation}
Consequently, within the local linear perturbation space, the quadratic limiting objective is minimized by
\begin{equation}
  \bar h=\frac{\sum_{i=1}^m w_i h_i}{\sum_{i=1}^m w_i}.
  \label{eq:additive-bary-perturbation}
\end{equation}
\end{proposition}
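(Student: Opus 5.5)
The proof follows the same uniform Taylor expansion used in Theorem~\ref{thm:f-local-square-root}, applied term by term to the sum.

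\textbf{Step 1: reduce to a single term.} Fix $i$ and write the likelihood ratio as
\[
\frac{p_i^\varepsilon}{q_k^\varepsilon}=1+\varepsilon\,\delta_\varepsilon,
\qquad
\delta_\varepsilon=\frac{h_i-k}{p+\varepsilon k}.
\]
Because $h_i/p$ and $k/p$ are essentially bounded, we have $\norm{\delta_\varepsilon}_\infty\le C$ uniformly for small $\varepsilon$. So the ratio lies in a compact neighborhood of $1$ on which $f$ is $C^2$.

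\textbf{Step 2: Taylor expansion with controlled remainder.} Since $f(1)=f'(1)=0$, Taylor's theorem with the modulus of continuity $\omega$ of $f''$ near $1$ gives
\[
f(1+\varepsilon\delta)=\tfrac{f''(1)}{2}\varepsilon^2\delta^2+R,
\qquad
|R|\le \tfrac12\varepsilon^2\delta^2\,\omega(C\varepsilon).
\]
Multiplying by $q_k^\varepsilon$ and integrating yields
\[
D_f(p_i^\varepsilon\Vert q_k^\varepsilon)=\tfrac{f''(1)}{2}\varepsilon^2\int\frac{(h_i-k)^2}{p+\varepsilon k}\dd\mu+o(\varepsilon^2).
\]
The remainder is $o(\varepsilon^2)$ because it is bounded by $\varepsilon^2\,\omega(C\varepsilon)\int (h_i-k)^2/(p+\varepsilon k)\dd\mu$.

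\textbf{Step 3: remove the $\varepsilon$ from the denominator.} Write
\[
\frac{1}{p+\varepsilon k}=\frac{1}{p}\cdot\frac{1}{1+\varepsilon k/p},
\]
where $1/(1+\varepsilon k/p)$ converges uniformly to $1$. Together with finiteness of $\int (h_i-k)^2/p\dd\mu$, this gives
\[
\int\frac{(h_i-k)^2}{p+\varepsilon k}\dd\mu=\int\frac{(h_i-k)^2}{p}\dd\mu+o(1).
\]
That integral is finite by $(a-b)^2\le 2a^2+2b^2$ and the assumed square-integrability of $h_i$ and $k$. Summing over $i$ with weights $w_i$ gives \eqref{eq:global-bary-local}.

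The only delicate point is the uniformity in Steps 2–3: the remainder must be controlled by the sup norm of the ratio perturbation rather than pointwise, which is exactly what the bounded-ratio assumption provides.

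\textbf{Step 4: minimize the limiting objective.} The limiting objective is
\[
J(k)=\sum_i w_i\norm{h_i-k}^2_{L^2(1/p)},
\]
a strictly convex quadratic in the weighted Hilbert space $L^2(1/p)$. Expanding gives
\[
J(k)=\Big(\sum_i w_i\Big)\norm{k-\bar h}^2+\text{const},
\]
using the standard bias–variance identity $\sum_i w_i(h_i-k)^2=\sum_i w_i(h_i-\bar h)^2+\big(\sum_i w_i\big)(\bar h-k)^2$ pointwise, since the cross term vanishes by the definition of $\bar h$. Hence $J$ is uniquely minimized at $k=\bar h$.

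This $\bar h$ is admissible. It has zero integral, a bounded ratio $\bar h/p$ and square-integrability, because it is a convex combination of the $h_i$. For small $\varepsilon$ it lies in $\operatorname{span}\{h_i\}$, which is the local linear perturbation space. This gives \eqref{eq:additive-bary-perturbation}.
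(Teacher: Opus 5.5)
Your proposal is correct and follows essentially the same route as the paper. Both use a uniform second-order Taylor expansion of $f$ at $1$ applied to the bounded likelihood ratio $p_i^\varepsilon/q_k^\varepsilon$, followed by the weighted bias--variance decomposition whose cross term vanishes at $\bar h$. Where you differ, you only add detail: the exact perturbation $\varepsilon(h_i-k)/(p+\varepsilon k)$, the modulus-of-continuity bound on the remainder, and the separate step replacing $p+\varepsilon k$ by $p$ spell out what the paper compresses into ``uniform Taylor expansion at $1$.''
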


\begin{proof}
For $u_i=h_i/p$ and $v=k/p$, the bounded-ratio assumptions give
\begin{equation*}
\frac{p_i^\varepsilon}{q_k^\varepsilon}=\frac{1+\varepsilon u_i}{1+\varepsilon v}=1+\varepsilon(u_i-v)+O(\varepsilon^2).
\end{equation*}
Uniform Taylor expansion at $1$ yields~\eqref{eq:global-bary-local}. The limiting quadratic objective equals a constant plus $(\sum_iw_i)\int (k-\bar h)^2/p\dd\mu$, because $\sum_iw_i(h_i-\bar h)=0$; hence $\bar h$ minimizes it within the local linear perturbation space.
\end{proof}

Proposition~\ref{prop:global-barycenter-local} separates two uses of divergences. Global $f$-barycenter objectives have the usual additive-weight target in their local quadratic limit; the proposition does not assert exact equality to linear pooling away from the local model. The square-root law is specific to endpoint-to-candidate binary balancing. With additive output weights, the following incompatibility remains.

\begin{corollary}
\label{cor:f-incompatibility}
No endpoint-to-candidate segment-balancing rule generated by a smooth $f$-divergence as in Theorem~\ref{thm:f-local-square-root} can, on all sufficiently close endpoint pairs satisfying the stated local assumptions, simultaneously satisfy all of the following requirements: additive output weight $a+b$, continuous coefficient depending only on $a$ and $b$, commutativity, associativity over all density classes with three affinely independent densities, and divergence balance.
\end{corollary}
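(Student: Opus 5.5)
The plan is a proof by contradiction that combines Theorem~\ref{thm:converse} with Theorem~\ref{thm:f-local-square-root}. Suppose such a rule exists: a binary operation $\odot$ of the form~\eqref{eq:general-rule}, with additive output weight $a+b$ and a continuous coefficient $\alpha(a,b)\in(0,1)$ that depends only on the weights. Assume further that $\odot$ is commutative and associative over every convex density class containing three affinely independent densities. Theorem~\ref{thm:converse} applies directly to the algebraic part of these hypotheses and forces $\alpha(a,b)=b/(a+b)$ for all $a,b>0$. It then suffices to show that this coefficient is incompatible with $D_f$-balance on close endpoint pairs.

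For that step, I would fix weights $a\neq b$, say $a=1$ and $b=4$. I would choose a concrete local family satisfying the assumptions of Theorem~\ref{thm:f-local-square-root}: on $\Omega=[0,1]$ with Lebesgue measure, take $p\equiv 1$ and $h(x)=\cos(2\pi x)$. Then $\int h=0$, $\int h^2/p=1/2\in(0,\infty)$, and $h/p$ is bounded, so $q_\varepsilon=p+\varepsilon h$ is a PDF for small $\varepsilon$. The same space also contains three affinely independent densities, for instance $1$, $1+\tfrac12\cos 2\pi x$ and $1+\tfrac12\sin 2\pi x$, so the density classes quantified over in Theorem~\ref{thm:converse} are compatible with this local setting. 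Divergence balance requires the coefficient $\alpha(a,b)$, which is independent of $\varepsilon$, to be a balancing root $t_\varepsilon$ of~\eqref{eq:f-balance-local} for every small $\varepsilon$. Theorem~\ref{thm:f-local-square-root} says every selected root converges to $\sqrt b/(\sqrt a+\sqrt b)$. A constant sequence equal to $b/(a+b)$ therefore forces
\[
\frac{b}{a+b}=\frac{\sqrt b}{\sqrt a+\sqrt b}.
\]
Clearing denominators gives $\sqrt{ab}(\sqrt b-\sqrt a)=0$, which holds only if $a=b$. With $a=1$ and $b=4$ the two sides are $4/5$ and $2/3$, a contradiction.

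The main obstacle is interpretive: fixing what ``on all sufficiently close endpoint pairs'' and ``divergence balance'' mean. My reading is that the rule's coefficient must be a balancing point for every pair $(p,q_\varepsilon)$ with $\varepsilon$ small. Because $\alpha$ depends only on the weights, the same number must serve as $t_\varepsilon$ for every $\varepsilon$, and that is exactly what lets the limit in~\eqref{eq:sqrt-local-limit} be applied.

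A second, smaller point is that Theorem~\ref{thm:converse} is stated with associativity holding for all densities, while the balance requirement is only local. This is harmless. The derivation of $\alpha$ uses just one triple of affinely independent densities for each weight triple, and small perturbations $p+\varepsilon h_i$ with linearly independent $h_i$ are already affinely independent. So the converse argument runs entirely inside the local regime, and the contradiction stands.
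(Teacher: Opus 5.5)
Your proposal is correct and follows essentially the same route as the paper. Theorem~\ref{thm:converse} forces $\alpha(a,b)=b/(a+b)$, while Theorem~\ref{thm:f-local-square-root} forces any weight-only balancing coefficient on the family $q_\varepsilon=p+\varepsilon h$ to equal $\sqrt b/(\sqrt a+\sqrt b)$, and these differ whenever $a\neq b$. Your concrete witness ($p\equiv1$, $h=\cos 2\pi x$ on $[0,1]$) and your remark that associativity is only needed for one nearby affinely independent triple are useful additions that the paper leaves implicit, but they do not change the argument.
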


\begin{proof}
Theorem~\ref{thm:converse} forces coefficient $b/(a+b)$ under the algebraic assumptions. For $a\neq b$, this value differs from $\sqrt b/(\sqrt a+\sqrt b)$. Theorem~\ref{thm:f-local-square-root} supplies arbitrarily close endpoint pairs whose selected $f$-divergence balancing coefficients converge to the latter value, so the listed requirements cannot all hold.
\end{proof}

KL balancing gives a finite witness of the same obstruction. For the numerical cases below, let $\oplus_{\KL}$ denote the binary rule that assigns output weight $w_a+w_b$ and chooses the unique segment point satisfying the KL balancing equation. Let $p_x$ be the Bernoulli density assigning probability $x$ to outcome $1$. For endpoints $x<y$, a segment point has Bernoulli parameter $z=(1-t)x+ty$, equivalently $t=(z-x)/(y-x)$, and the KL balancing equation is
\begin{equation*}
  w_a\KL(p_x\Vert p_z)=w_b\KL(p_y\Vert p_z).
\end{equation*}
For Bernoulli parameters $x<y$, the derivative of $w_a\KL(p_x\Vert p_z)-w_b\KL(p_y\Vert p_z)$ with respect to $z$ is
\begin{equation*}
  \frac{(w_a-w_b)z-w_ax+w_by}{z(1-z)}.
\end{equation*}
\begin{example}\label{ex:kl-nonassociative}
Consider the three equal-weight Bernoulli inputs $(1,p_{0.1}), (1,p_{0.6})$ and $(1,p_{0.9})$.
For the balancing equations used in this example, the derivative above is
strictly positive on the relevant intervals, so each root is unique.
One-dimensional root solving gives
\[(1,p_{0.1})\oplus_{\KL}(1,p_{0.6})=(2,p_{0.332731})\,
  \,\text{and}\,\,
  (1,p_{0.6})\oplus_{\KL}(1,p_{0.9})=(2,p_{0.761290}).\]
The two parenthesizations then give $((1,p_{0.1})\oplus_{\KL}(1,p_{0.6}))\oplus_{\KL}(1,p_{0.9})
  =(3,p_{0.581166})$, whereas $(1,p_{0.1})\oplus_{\KL}((1,p_{0.6})\oplus_{\KL}(1,p_{0.9}))
  =(3,p_{0.478333})$. The final Bernoulli parameters differ by approximately $0.102834$, so repeated local KL balancing is not associative.
\end{example}

\section{Compression Congruences for Gaussian Mixtures}\label{sec:gmm}
We now instantiate the algebraic results in a finite model class. Gaussian mixtures are closed under exact weighted linear fusion, but exact closure may
increase the number of mixture components. Stepwise compression preserves
order-invariance only under an algebraic compatibility condition.

A finite Gaussian-mixture density on $\R^d$ has the form
\begin{equation}
  p(x)=\sum_{r=1}^m \alpha_r\N(x;\mu_r,\Sigma_r),
  \qquad
  \alpha_r\geq0
  \quad\text{and}\quad
  \sum_{r=1}^m\alpha_r=1,
  \label{eq:gmm}
\end{equation}
where each covariance matrix $\Sigma_r$ is positive definite. A component is one Gaussian term in the chosen finite representation. Components with zero coefficient may be removed, and identical components may be merged.

The closure property is elementary but important: exact weighted fusion is
linear pooling at the density level.

\begin{proposition}
\label{prop:gmm-exact-closure}
For $i=1,\ldots,n$, let $p_i(x)=\sum_{r=1}^{m_i}\alpha_{ir}\N(x;\mu_{ir},\Sigma_{ir})$ be a finite Gaussian mixture and let $w_i>0$. The exact weighted fusion
\begin{equation}
  p_{\mathrm{fused}}(x)
  =
  \sum_i\sum_{r=1}^{m_i}
  \frac{w_i\alpha_{ir}}{\sum_jw_j}
  \N(x;\mu_{ir},\Sigma_{ir})
  \label{eq:gmm-fusion}
\end{equation}
 is a Gaussian mixture with at most $\sum_i m_i$ component terms. If all component weights in the concatenated representation are positive and no two components have the same mean and covariance, that representation has exactly $\sum_i m_i$ nonzero components.
\end{proposition}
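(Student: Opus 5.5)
The plan is to compute the fused density directly from the definition of exact weighted fusion. By Proposition~\ref{prop:monoid}, the fused density is $\sum_i w_ip_i/\sum_j w_j$, independently of order and parenthesization. Substituting each $p_i$ as a finite Gaussian mixture and using linearity of finite sums gives
\[
\frac{\sum_i w_i p_i(x)}{\sum_j w_j}=\sum_i\sum_{r=1}^{m_i}\frac{w_i\alpha_{ir}}{\sum_j w_j}\N(x;\mu_{ir},\Sigma_{ir}),
\]
which is exactly~\eqref{eq:gmm-fusion}.

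Next I will verify the mixture conditions in~\eqref{eq:gmm}. Each coefficient $w_i\alpha_{ir}/\sum_jw_j$ is nonnegative, since $w_i>0$ and $\alpha_{ir}\ge0$. The coefficients sum to $\sum_i w_i(\sum_r\alpha_{ir})/\sum_jw_j=1$. Every covariance $\Sigma_{ir}$ is inherited from some input and is therefore positive definite. Hence the concatenated expression is a finite Gaussian mixture with $\sum_i m_i$ terms. Dropping terms with zero coefficient and merging identical components can only reduce this count, which gives the bound of at most $\sum_i m_i$ component terms.

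For the exact count, I assume that all concatenated coefficients are positive and that the pairs $(\mu_{ir},\Sigma_{ir})$ are pairwise distinct. Then no term has zero coefficient and no two terms are identical components, so neither reduction rule applies. The representation therefore retains all $\sum_i m_i$ terms, each with a positive weight.

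The main subtlety is what "exactly $\sum_i m_i$ nonzero components" means. Read as a statement about the concatenated representation, the argument above is complete. If it is meant as the minimal number of components of the density, I would additionally invoke the classical linear independence of Gaussian densities with distinct $(\mu,\Sigma)$ (Yakowitz--Spragins identifiability). That result shows no shorter representation exists. Since the statement speaks of ``that representation'', I would remark on this point but rely on the representation-level reading.
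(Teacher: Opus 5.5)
Your proposal is correct and follows essentially the same route as the paper. Like the paper, you substitute the mixture expansions into $\sum_i w_ip_i/\sum_jw_j$, check that the coefficients are nonnegative and sum to one, and note that zero-coefficient deletion and duplicate merging can only reduce the count and do not apply under the stated hypotheses. Your extra remarks on inherited positive definiteness and on Gaussian identifiability (for a minimal-representation reading) are sound refinements, though the paper uses only the representation-level reading.
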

\begin{proof}
Substituting each mixture expansion into $\sum_i w_i p_i/\sum_j w_j$
gives~\eqref{eq:gmm-fusion}. The coefficients in~\eqref{eq:gmm-fusion} are nonnegative and sum to one. Duplicate or zero-weight component terms may be removed. If all component weights are positive and no duplicates occur, the concatenated representation has exactly $\sum_i m_i$ nonzero components.
\end{proof}

Proposition~\ref{prop:gmm-exact-closure} describes a density-level aggregate of already trained local models. It does not infer latent component correspondences, merge nearby components, or optimize likelihood under a prescribed component budget. For example, exact post-hoc fusion of two one-dimensional Gaussians $\N(-1,\sigma^2)$ and $\N(1,\sigma^2)$ with equal weights gives a two-component mixture, whereas the population maximum-likelihood one-Gaussian approximation has mean $0$ and variance $\sigma^2+1$~\cite{dempster1977maximum,mclachlan2000finite}.

To study compression, associate with a weighted Gaussian-mixture state $(w,p)$, where $p=\sum_r\alpha_r\N(\cdot;\mu_r,\Sigma_r)$, the unnormalized component measure $M_{(w,p)}=\sum_r\beta_r\delta_{(\mu_r,\Sigma_r)}$ with $\beta_r=w\alpha_r$. Exact weighted fusion corresponds to addition of these unnormalized measures followed by normalization by total mass. We therefore use unnormalized component measures 
\begin{equation}
  M=\sum_{r=1}^m \beta_r\delta_{\theta_r},
  \qquad
  \theta_r=(\mu_r,\Sigma_r)
  \quad\text{and}\quad
  \beta_r\geq0,
  \label{eq:component-measure}
\end{equation}
including the zero measure. Here $\delta_\theta$ denotes the Dirac measure at $\theta$, and $W(M)=\sum_r\beta_r$ denotes the total mass. This unnormalized representation is different from the formal identity $\e$ for positive PDF states: here $0$ is an actual zero component measure. Exact fusion before final normalization is addition in the commutative monoid $\mathcal M$ of such measures. Let $C:\mathcal M\to\mathcal M$ be an idempotent compression map with $C(0)=0$. When $C$ is used to compress weighted PDF states, we additionally require total-mass preservation $W(C(M))=W(M)$ unless an explicit weight renormalization is intended. Stepwise compressed fusion is
\begin{equation}
  M\odot_C N=C(M+N).
  \label{eq:compressed-product}
\end{equation}
The question is when evaluating~\eqref{eq:compressed-product} at every internal node gives the same result as exact summation followed by one final compression.

\begin{theorem}\label{thm:compression-congruence}
Let $C:\mathcal M\to\mathcal M$ be idempotent with $C(0)=0$. The following
conditions are equivalent.
\begin{enumerate}
  \item For every finite family $M_1,\ldots,M_n\in\mathcal M$, $n\geq2$,
  every binary-tree evaluation using $\odot_C$ returns
  $C(\sum_i M_i)$.

  \item $C$ is incrementally compatible with addition:
  \begin{equation}
    C(C(M)+N)=C(M+N)
    \label{eq:incremental-compatibility}
  \end{equation}
  for all $M,N\in\mathcal M$.

  \item The equivalence relation $M\sim N$ iff $C(M)=C(N)$ is a congruence
  for addition: $M\sim N$ implies $M+R\sim N+R$ for every $R\in\mathcal M$.
\end{enumerate}
\end{theorem}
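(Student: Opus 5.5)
I will prove the cycle of implications (1)$\Rightarrow$(2)$\Rightarrow$(3)$\Rightarrow$(2)$\Rightarrow$(1). Idempotence of $C$ and commutativity of addition in $\mathcal M$ carry most of the argument. It is convenient to note first that (2) is symmetric: by commutativity of $+$, (2) also gives $C(M+C(N))=C(M+N)$. Applying it twice then gives $C(C(M)+C(N))=C(M+N)$.

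\emph{(1)$\Rightarrow$(2).} Apply (1) to the family $(C(M),N)$ with $n=2$. The only binary tree is the single node, which evaluates to $C(C(M)+N)$, and (1) says this equals $C(C(M)+N)$. That is vacuous, so the two-element family does not suffice. Instead I use the three-element family $(M,0,N)$ with the tree $(M\odot_C 0)\odot_C N$. Since $M\odot_C 0=C(M)$, this tree evaluates to $C(C(M)+N)$, and (1) forces it to equal $C(M+0+N)=C(M+N)$. The permitted use of the zero measure is exactly what makes this step work, which is why $0\in\mathcal M$ matters.

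\emph{(2)$\Leftrightarrow$(3).} Assume (2) and suppose $C(M)=C(N)$. Then $C(M+R)=C(C(M)+R)=C(C(N)+R)=C(N+R)$, which is (3). Conversely, assume (3). Idempotence gives $C(C(M))=C(M)$, so $C(M)\sim M$. The congruence property then gives $C(M)+N\sim M+N$, that is, $C(C(M)+N)=C(M+N)$, which is (2).

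\emph{(2)$\Rightarrow$(1).} I argue by strong induction on the number of leaves, with the claim that every tree over leaves $M_i$, $i\in A$, evaluates to $C(\sum_{i\in A}M_i)$. This holds for trees with at least two leaves; a single leaf $M_i$ evaluates to itself, which need not equal $C(M_i)$. Consider a root whose subtrees have leaf sets $A$ and $B$. Write $X_A$ for the value of the subtree over $A$: either $X_A=C(\sum_A M_i)$, or $X_A=M_i$ when $A$ is a singleton, and similarly for $B$. In every case $C(X_A)=C(\sum_A M_i)$ by idempotence. The root value is $C(X_A+X_B)$. Using (2) together with its symmetric form, I can replace $X_A$ by $C(X_A)$ and $X_B$ by $C(X_B)$ inside $C(\cdot)$, and then replace those by $\sum_A M_i$ and $\sum_B M_i$. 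This gives $C(X_A+X_B)=C(\sum_{A\cup B}M_i)$.

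The main subtlety is the base of (1)$\Rightarrow$(2), where $n=2$ alone gives nothing. It is resolved by inserting the zero measure as a leaf, using $C(0)=0$ implicitly only through $M+0=M$. The total-mass requirement is not needed anywhere in this argument.
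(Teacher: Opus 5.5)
Your proof is correct and follows the paper's argument. You use the three-leaf family $M,0,N$ for (1)$\Rightarrow$(2), the idempotence-plus-congruence computation for (2)$\Leftrightarrow$(3), and induction over trees for (2)$\Rightarrow$(1), where your observation $C(X_A)=C(\sum_{i\in A}M_i)$ merely folds the paper's leaf/nontrivial-child case split into one step; one minor correction to your closing remark is that $M+0=M$ is just the monoid identity law, and $C(0)=0$ is never actually used.
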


\begin{proof}
Assume~\eqref{eq:incremental-compatibility}. Induction over binary trees shows that every nontrivial subtree with leaf set $I$ evaluates to $C(\sum_{i\in I}M_i)$; the induction step uses~\eqref{eq:incremental-compatibility} once or twice depending on whether the children are leaves. Conversely, the tree condition applied to the allowed three-leaf family $M,0,N$ gives $C(C(M)+N)=C(M+N)$. Finally, idempotence gives $M\sim C(M)$, so congruence implies incremental compatibility; the reverse follows from $C(M+R)=C(C(M)+R)=C(C(N)+R)=C(N+R)$ whenever $C(M)=C(N)$.
\end{proof}

The proof uses only the commutative-monoid laws, so the equivalence holds with addition replaced by any commutative monoid operation. For an increasing bijection $g:\R_{>0}\to\R_{>0}$, let $\psi_g(0)=0$ and $\psi_g(M)=g(W(M))M/W(M)$ for $W(M)>0$. The map is bijective, and the inverse rescales every nonzero $M$ by $g^{-1}(W(M))/W(M)$. Then, $M\boxplus_gN=\psi_g^{-1}(\psi_g(M)+\psi_g(N))$; Proposition~\ref{prop:effective-weight} gives the corresponding scalar construction.

The theorem reduces compression-compatible fusion to a monoid-congruence
condition. Identity compression satisfies the condition. Moment compression to one Gaussian also satisfies it for $W(M)>0$, with $C(0)=0$ by convention, when the state stores $W(M)$, the first moment $m(M)=\sum_r\beta_r\mu_r$, and the second raw moment $S(M)=\sum_r\beta_r(\Sigma_r+\mu_r\mu_r^\top)$; these summaries are additive and preserved by compression. Since the component covariances are positive definite and $W(M)>0$, the moment-matched covariance is positive definite. Fixed-dictionary compression satisfies the condition for the same reason provided the dictionary assignment is fixed, componentwise, independent of the current batch or aggregation tree, and assigns dictionary atoms to themselves so that $C$ is idempotent: it stores additive mass assigned to each dictionary element.

For weighted Gaussian states, let $\oplus_C$ mean exact weighted fusion followed by the compression $C$ on the associated unnormalized component measure. Common local reduction heuristics require more care. Gaussian-mixture reduction by pruning or pairwise merging is a standard approximation problem~\cite{runnalls2007kullback}, but such reduction does not automatically define a schedule-independent fusion rule.

\begin{example}\label{ex:gmm-pruning-nonassociative}
Let $C$ retain the Gaussian component with largest normalized
coefficient, equivalently largest unnormalized component mass within a fixed
total weight. The total state weight is preserved, and ties are broken by choosing
the smaller mean. Consider
\[x=(1,\N(0,1)),\qquad
  y=(1,\N(1,1))\quad\text{and}\quad
  z=(1,\N(2,1)).
\]
Then $C(x\oplus y)=(2,\N(0,1))$ and $ C(y\oplus z)=(2,\N(1,1))$.
Therefore,
\[ (x\oplus_C y)\oplus_C z=(3,\N(0,1))
  \quad\text{and}\quad
  x\oplus_C(y\oplus_C z)=(3,\N(1,1)).
\]
The two final states differ, so local pruning is not associative.
\end{example}

Lossy compression is therefore not an implementation detail unless it satisfies
the congruence condition or carries an explicit approximation guarantee.
Example~\ref{ex:gmm-pruning-nonassociative} separates exact semantic protocols
from local reduction heuristics: exact weighted fusion and additive summaries are
tree-independent, while local pruning is not.

\section{Related Work and Discussion}\label{sec:related} 
Linear and logarithmic opinion pools are standard rules for aggregating
probability distributions and expert forecasts
\cite{stone1961opinion,genest1986combining,jacobs1995methods}. Our question is
not which pooling formula to choose, but when a binary pooling rule can be used
as a schedule-independent distributed protocol. Within continuous
segment-valued rules with additive weights and weight-only coefficients,
Theorem~\ref{thm:converse} shows that order-invariant iteration forces
normalized linear pooling. This places Theorem~\ref{thm:converse} close to representation results for associative means and functional equations~\cite{aczel2006lectures}, but our setting is weighted density states and hierarchical execution.

Dempster--Shafer theory, the transferable belief model, and distances between bodies of evidence provide broader frameworks for uncertain evidence fusion~\cite{dempster2008upper,shafer1976mathematical,smets1994transferable,jousselme2001new}. Distributional averages based on $f$-divergences, Bregman divergences, or Wasserstein barycenters define global aggregation principles~\cite{ali1966general,csiszar1967information,banerjee2005clustering,nielsen2009sided,agueh2011barycenters}. These principles answer a different question: a global barycenter may be well posed, but this does not imply that repeatedly solving binary barycenter or balancing subproblems along an arbitrary tree is associative.

Den{\oe}ux studies distributed combination when agents at the nodes of a communication network hold belief functions~\cite{denoeux2021distributed}; our analysis instead concerns tree-independent binary fusion of weighted densities.

Gaussian-mixture learning and Gaussian-mixture reduction are adjacent but distinct~\cite{dempster1977maximum,mclachlan2000finite,runnalls2007kullback,zhang2022distributed}. Distributed learning of Gaussian mixtures estimates mixture parameters from partitioned data or sufficient statistics. Mixture-reduction algorithms compress a given mixture under an approximation criterion. We instead assume already trained local densities and ask which exact or compressed post-hoc fusion procedures have a tree-independent semantics. The divergence and compression results are compositional boundary results: pairwise solvability and model-size reduction are not enough for schedule-independent uncertainty fusion. Our characterizations apply to distributed sensor networks, federated density ensembles, and hierarchical mixture reduction, where communication and scheduling determine the aggregation tree; they distinguish rules that tolerate arbitrary tree changes from rules requiring transformed weights or congruence-compatible summaries.

\section{Conclusion}\label{sec:conclusion}

We have studied weighted density fusion as a compositionality problem for scalable uncertainty management. Our results give exact algebraic conditions for schedule-independent binary fusion in the segment-valued additive class. In this class, norm-induced segment balancing yields weighted linear pooling, and commutative, associative hierarchical execution forces the normalized-weight coefficient. Transformed-weight rules avoid this conclusion only by changing the quantity propagated additively by the state.

Smooth endpoint-to-candidate $f$-divergence balancing does not satisfy the
same additive-weight compositional requirement. Under the bounded-ratio local
model of Theorem~\ref{thm:f-local-square-root}, its quadratic geometry induces
square-root effective weights rather than additive source weights, yielding only
first-order local tree-independence under square-root propagation; the Bernoulli
KL example gives a finite associativity failure. Global $f$-barycenter
objectives behave differently: their local quadratic limit retains additive-weight
targets. For Gaussian mixtures, exact fusion is schedule-independent but may
increase the number of mixture components, while stepwise compression is order-independent
exactly when idempotent compression is compatible with unnormalized component
addition. Future work should quantify how endpoint-dependent divergence rules and
approximate compression maps deviate from associativity, for example through
explicit associator or tree-dependent error bounds.

\paragraph{\textbf{Acknowledgments.}}
We acknowledge funding by the German Research Foundation (DFG), SFB 1574 ``Circular Factory,'' project number 471687386.

\bibliographystyle{splncs04}
\bibliography{ref}

\clearpage
\appendix
\renewcommand{\theHsection}{appendix.\Alph{section}}
\section{Extended Proofs}\label{app:proofs}

Notations are the same as in Sections~\ref{sec:balance}--\ref{sec:gmm}.

\subsection*{Proof of Theorem~\ref{thm:norm-characterization}}

Let $p_t=[p_a,p_b]_t=(1-t)p_a+tp_b$. Then,
\begin{equation*}
  p_t-p_a=t(p_b-p_a)
  \quad\text{and}\quad
  p_t-p_b=(1-t)(p_a-p_b).
\end{equation*}
Since $\Delta(p,q)=\norm{p-q}$ is induced by a norm, absolute homogeneity gives
\begin{equation*}
  \Delta(p_a,p_t)=\norm{p_a-p_t}=t\norm{p_b-p_a}
\end{equation*}
and
\begin{equation*}
  \Delta(p_b,p_t)=\norm{p_b-p_t}=(1-t)\norm{p_b-p_a}.
\end{equation*}
If $p_a\neq p_b$ in the almost-everywhere quotient, then $\norm{p_b-p_a}>0$. The balance equation is therefore equivalent to
\begin{equation*}
  w_a t\norm{p_b-p_a}=w_b(1-t)\norm{p_b-p_a},
\end{equation*}
and cancellation of the positive common factor yields $w_at=w_b(1-t)$. Solving gives
\begin{equation*}
  t^*=\frac{w_b}{w_a+w_b}.
\end{equation*}
Substitution into $[p_a,p_b]_t$ gives the stated fused density. The scalar equation in $t$ is linear with nonzero coefficient $w_a+w_b$, so the balancing point is unique. If $p_a=p_b$ almost everywhere, then $[p_a,p_b]_t=p_a$ for all $t\in[0,1]$. The parameter may not be unique, but the density is unique.

\subsection*{Proof of Proposition~\ref{prop:monoid}}

For positive states, the operation in~\eqref{eq:oplus} gives
\begin{equation*}
  (w_a,p_a)\oplus(w_b,p_b)
  =
  \left(w_a+w_b,\frac{w_ap_a+w_bp_b}{w_a+w_b}\right).
\end{equation*}
Applying $\eta$ gives
\begin{align*}
  \eta((w_a,p_a)\oplus(w_b,p_b))
  &=\left(w_a+w_b,(w_a+w_b)\frac{w_ap_a+w_bp_b}{w_a+w_b}\right)\\
  &=(w_a+w_b,w_ap_a+w_bp_b)\\
  &=(w_a,w_ap_a)+(w_b,w_bp_b)\\
  &=\eta(w_a,p_a)+\eta(w_b,p_b).
\end{align*}
The element $\e$ satisfies $\e\oplus x=x\oplus\e=x$ by definition, and $\eta(\e)=(0,0)$ is the additive identity in represented coordinates. Closure follows from convexity of $\Pcal$: the second coordinate of the fused state is a convex combination of $p_a$ and $p_b$, because $w_a,w_b>0$ and the coefficients sum to one. Since addition of represented pairs is commutative and associative, the original operation is commutative and associative as well. For a finite nonempty family, repeated use of the additive representation yields
\begin{equation*}
  \eta(x_1\oplus\cdots\oplus x_m)
  =\sum_{i=1}^m\eta(w_i,p_i)
  =\left(\sum_{i=1}^m w_i,\sum_{i=1}^m w_ip_i\right).
\end{equation*}
Recovering the weighted state from the represented pair gives~\eqref{eq:finite-fusion}. Commutativity and associativity make the result independent of the parenthesization and order.

\subsection*{Proof of Lemma~\ref{lem:cocycle-representation}}

The assumptions give a continuous function $m:\R_{>0}^2\to(0,1)$ satisfying
\begin{equation*}
  m(a,b)m(a+b,c)=m(a,b+c).
\end{equation*}
Since $m$ is positive, the function $h(a,b)=-\log m(a,b)$ is continuous and satisfies
\begin{equation}
  h(a,b)+h(a+b,c)=h(a,b+c)
  \label{eq:app-cocycle-h}
\end{equation}
for all $a,b,c>0$. We construct a potential $H$ with $h(a,b)=H(a+b)-H(a)$.
Choose $r>0$. We first justify the boundary value at $r$. For fixed $a,c>0$, equation~\eqref{eq:app-cocycle-h} gives $h(a,b)+h(a+b,c)=h(a,b+c)$ for every $b>0$. Letting $b\downarrow0$ and using continuity of $h$ on the positive domain gives $\lim_{b\downarrow0}h(a,b)=0$. Hence $h(r,x-r)\to0$ as $x\downarrow r$ and $h(x,r-x)\to0$ as $x\uparrow r$. On $(r,\infty)$, set $H(x)=h(r,x-r)$, set $H(r)=0$, and on $(0,r)$ set $H(x)=-h(x,r-x)$. The preceding limit shows that $H$ is continuous at $r$.

We verify the representation. If $a>r$, then $a+b>r$, and~\eqref{eq:app-cocycle-h} with first argument $r$, second argument $a-r$, and third argument $b$ gives
\begin{equation*}
  h(r,a-r)+h(a,b)=h(r,a+b-r).
\end{equation*}
Thus,
\begin{equation*}
  h(a,b)=H(a+b)-H(a).
\end{equation*}
If $a+b<r$, then $a<r$. Applying~\eqref{eq:app-cocycle-h} with first argument $a$, second argument $b$, and third argument $r-a-b$ gives
\begin{equation*}
  h(a,b)+h(a+b,r-a-b)=h(a,r-a),
\end{equation*}
so
\begin{equation*}
  h(a,b)=-h(a+b,r-a-b)+h(a,r-a)=H(a+b)-H(a).
\end{equation*}
If $a<r<a+b$, let $c=r-a>0$ and $d=a+b-r>0$. Applying~\eqref{eq:app-cocycle-h} to $(a,c,d)$ gives
\begin{equation*}
  h(a,r-a)+h(r,a+b-r)=h(a,b).
\end{equation*}
By the definition of $H$, this is again $h(a,b)=H(a+b)-H(a)$. The boundary cases follow by continuity. Hence the representation holds for all positive $a,b$.

Let $F(x)=\exp H(x)$. Then, $F$ is positive and continuous, and
\begin{equation*}
  m(a,b)=\exp(-h(a,b))=\frac{F(a)}{F(a+b)}.
\end{equation*}
If another positive continuous $\widetilde F$ gives the same representation, then $F(a)/\widetilde F(a)=F(a+b)/\widetilde F(a+b)$ for all $a,b>0$. For any $x<y$, choose $b=y-x$ and obtain the same ratio at $x$ and $y$; hence the ratio is constant on $\R_{>0}$. Thus, $F$ is unique up to multiplication by a positive constant.

\subsection*{Proof of Theorem~\ref{thm:converse}}

Let $m(a,b)=1-\alpha(a,b)$. Then $m$ is continuous and takes values in $(0,1)$. For positive weights $a,b$, the operation has the form
\begin{equation*}
  (a,p)\odot(b,q)=(a+b,m(a,b)p+(1-m(a,b))q).
\end{equation*}
Choose a convex PDF class containing three affinely independent densities $p_a,p_b,p_c$. Associativity says that the density component of
\begin{equation*}
  ((a,p_a)\odot(b,p_b))\odot(c,p_c)
\end{equation*}
coincides with the density component of
\begin{equation*}
  (a,p_a)\odot((b,p_b)\odot(c,p_c)).
\end{equation*}
Expanding the left parenthesization gives the coefficient of $p_a$ as $m(a,b)m(a+b,c)$. Expanding the right parenthesization gives the coefficient of $p_a$ as $m(a,b+c)$. Since $p_a,p_b,p_c$ are affinely independent, equality of the two convex combinations implies equality of the corresponding barycentric coefficients. Therefore,
\begin{equation*}
  m(a,b)m(a+b,c)=m(a,b+c)
\end{equation*}
for all $a,b,c>0$.

Lemma~\ref{lem:cocycle-representation} gives a positive continuous function $F$ with
\begin{equation*}
  m(a,b)=\frac{F(a)}{F(a+b)}.
\end{equation*}
Commutativity of $\odot$ means that the density produced from $(a,p)$ and $(b,q)$ is the same as the density produced from $(b,q)$ and $(a,p)$. Comparing the coefficients of $p$ and $q$ gives
\begin{equation*}
  m(a,b)=1-m(b,a),
\end{equation*}
or equivalently $m(a,b)+m(b,a)=1$. Substituting the representation gives
\begin{equation*}
  \frac{F(a)}{F(a+b)}+\frac{F(b)}{F(a+b)}=1.
\end{equation*}
Thus,
\begin{equation*}
  F(a+b)=F(a)+F(b)
\end{equation*}
for all positive $a,b$. Since $F$ is continuous and positive on $\R_{>0}$, the Cauchy equation on the positive reals has the form $F(w)=cw$ for some constant $c>0$. Hence
\begin{equation*}
  m(a,b)=\frac{a}{a+b}
  \quad\text{and}\quad
  \alpha(a,b)=1-m(a,b)=\frac{b}{a+b}.
\end{equation*}

\subsection*{Proof of Proposition~\ref{prop:effective-weight}}

Let
\begin{equation*}
  \eta_g(w,p)=(g(w),g(w)p)
  \quad\text{and}\quad
  \eta_g(\e)=(0,0).
\end{equation*}
For positive states, the operation in~\eqref{eq:effective-weight-rule} gives
\begin{align*}
  \eta_g((a,p)\oplus_g(b,q))
  &=\left(g(a)+g(b), (g(a)+g(b))\frac{g(a)p+g(b)q}{g(a)+g(b)}\right)\\
  &=(g(a)+g(b),g(a)p+g(b)q)\\
  &=\eta_g(a,p)+\eta_g(b,q).
\end{align*}
The first represented coordinate is positive for every positive state, and the original state is recovered from $(s,r)$ with $s>0$ as $(g^{-1}(s),r/s)$. The formal element $\e$ corresponds to the additive zero. Therefore, addition of represented pairs transfers closure, commutativity, associativity, and the identity laws to $\oplus_g$. Iteration gives
\begin{equation*}
  \eta_g(x_1\oplus_g\cdots\oplus_g x_m)
  =\left(\sum_{i=1}^m g(w_i),\sum_{i=1}^m g(w_i)p_i\right),
\end{equation*}
which recovers~\eqref{eq:effective-weight-finite}.

\subsection*{Proof of Theorem~\ref{thm:f-local-square-root}}

Let $u=h/p$. The assumptions give $\int u p\dd\mu=0$, $I=\int u^2p\dd\mu\in(0,\infty)$, and $\norm{u}_{\infty}<\infty$. For sufficiently small $\varepsilon>0$, the quantities $1+\varepsilon u$ and $1+t\varepsilon u$ are bounded away from zero uniformly in $t\in[0,1]$. Hence $q_\varepsilon=p(1+\varepsilon u)$ and $p_{\varepsilon,t}=p(1+t\varepsilon u)$ are nonnegative densities, and all ratios entering $f$ lie in a compact interval contained in the neighborhood where $f$ is twice continuously differentiable. The endpoint maps are finite and continuous; the expansions below show that the balance function is negative at $t=0$ and positive at $t=1$ for sufficiently small $\varepsilon$, so at least one root exists.

For the left endpoint divergence, the ratio is
\begin{equation*}
  \frac{p}{p_{\varepsilon,t}}=\frac{1}{1+t\varepsilon u}.
\end{equation*}
Uniformly in $t\in[0,1]$,
\begin{equation*}
  \frac{1}{1+t\varepsilon u}=1-t\varepsilon u+t^2\varepsilon^2u^2+O(\varepsilon^3),
\end{equation*}
where the remainder is dominated by a constant multiple of $\varepsilon^3$ because $u$ is essentially bounded. Since $f(1)=f'(1)=0$,
\begin{equation*}
  f(1+s)=\frac{f''(1)}{2}s^2+o(s^2)
\end{equation*}
uniformly for $s$ in a sufficiently small compact neighborhood of zero. Multiplying by $p_{\varepsilon,t}=p(1+t\varepsilon u)$, integrating, and using bounded domination gives
\begin{align*}
  D_f(p\Vert p_{\varepsilon,t})
  &=\int p(1+t\varepsilon u)
      f\!\left(\frac{1}{1+t\varepsilon u}\right)\dd\mu\\
  &=\frac{f''(1)}{2}t^2\varepsilon^2\int u^2p\dd\mu+o(\varepsilon^2)\\
  &=\frac{f''(1)}{2}t^2\varepsilon^2 I+o(\varepsilon^2),
\end{align*}
with the $o(\varepsilon^2)$ term uniform over $t\in[0,1]$.

For the right endpoint divergence,
\begin{equation*}
  \frac{q_\varepsilon}{p_{\varepsilon,t}}
  =\frac{1+\varepsilon u}{1+t\varepsilon u}
  =1+(1-t)\varepsilon u+O(\varepsilon^2)
\end{equation*}
uniformly in $t$. The same Taylor argument gives
\begin{equation*}
  D_f(q_\varepsilon\Vert p_{\varepsilon,t})
  =\frac{f''(1)}{2}(1-t)^2\varepsilon^2 I+o(\varepsilon^2),
\end{equation*}
uniformly in $t\in[0,1]$.

Let $t_\varepsilon$ be any selected solution of~\eqref{eq:f-balance-local}. The interval $[0,1]$ is compact, so every sequence $\varepsilon_k\downarrow0$ has a subsequence along which $t_{\varepsilon_k}\to\tau\in[0,1]$. Along that subsequence, divide the balance equation by $(f''(1)/2)\varepsilon_k^2I$. The two uniform expansions yield
\begin{equation*}
  a\tau^2=b(1-\tau)^2.
\end{equation*}
Both sides are nonnegative. Taking square roots gives $\sqrt a\,\tau=\sqrt b\,(1-\tau)$, and therefore
\begin{equation*}
  \tau=\frac{\sqrt b}{\sqrt a+\sqrt b}.
\end{equation*}
Every convergent subsequence has the same limit, so the whole family $t_\varepsilon$ converges to the stated value.

\subsection*{Proof of Corollary~\ref{cor:sqrt-first-order}}

For a set $A$ of leaves, let
\begin{equation*}
  G_A=\sum_{i\in A}\sqrt{w_i}
  \quad\text{and}\quad
  H_A=\frac{1}{G_A}\sum_{i\in A}\sqrt{w_i}\,h_i.
\end{equation*}

Let $V=\operatorname{span}\{h_1,\ldots,h_m\}$ and equip $V$ with the norm
$\|r\|_{p,\infty}:=\|r/p\|_\infty$. This norm is finite on $V$ because each
$h_i/p$ is essentially bounded. We prove by induction on subtrees that the
state computed for every subtree with leaf set $A$ has propagated weight
$G_A^2$ and density
\begin{equation*}
  p+\varepsilon H_A+r_A^\varepsilon
  \quad\text{with}\quad
  \|r_A^\varepsilon\|_{p,\infty}=o(\varepsilon).
\end{equation*}
Every intermediate density is a convex combination of the leaf densities, hence
has the form $p+\varepsilon H$ with $H$ in the compact convex hull generated by
the $h_i$. Therefore, the bounded-ratio assumptions persist uniformly at internal
nodes. The assertion is immediate for a leaf, with $r_A^\varepsilon=0$.

Assume the assertion for two child subtrees with leaf sets $A$ and $B$. Their densities have the form
\begin{equation*}
  p+\varepsilon H_A^\varepsilon
  \quad\text{and}\quad
  p+\varepsilon H_B^\varepsilon,
\end{equation*}
where $H_A^\varepsilon\to H_A$ and $H_B^\varepsilon\to H_B$ in $\|\cdot\|_{p,\infty}$ within the finite-dimensional span generated by the $h_i$. Let $\theta_\varepsilon$ be a balancing coefficient at the parent. A candidate point on the segment is
\begin{equation*}
  p+\varepsilon\bigl((1-\theta_\varepsilon)H_A^\varepsilon+\theta_\varepsilon H_B^\varepsilon\bigr).
\end{equation*}
The same uniform Taylor expansion as in Theorem~\ref{thm:f-local-square-root}, applied over the compact bounded-ratio set containing the two perturbations $H_A^\varepsilon$ and $H_B^\varepsilon$, gives
\begin{align*}
D_f\bigl(p+\varepsilon H_A^\varepsilon\Vert
p+\varepsilon((1-\theta)H_A^\varepsilon+\theta H_B^\varepsilon)\bigr)
&=\frac{f''(1)}{2}\varepsilon^2\theta^2 J_\varepsilon+o(\varepsilon^2),\\
D_f\bigl(p+\varepsilon H_B^\varepsilon\Vert
p+\varepsilon((1-\theta)H_A^\varepsilon+\theta H_B^\varepsilon)\bigr)
&=\frac{f''(1)}{2}\varepsilon^2(1-\theta)^2 J_\varepsilon+o(\varepsilon^2),
\end{align*}
where
\begin{equation*}
  J_\varepsilon=\int_\Omega \frac{(H_B^\varepsilon-H_A^\varepsilon)^2}{p}\dd\mu.
\end{equation*}
If $H_A\neq H_B$, then $J_\varepsilon$ converges to a positive limit. Dividing the balance equation by $(f''(1)/2)\varepsilon^2J_\varepsilon$ gives
\begin{equation*}
  G_A^2\theta_\varepsilon^2=G_B^2(1-\theta_\varepsilon)^2+o(1),
\end{equation*}
so every accumulation point of $\theta_\varepsilon$ is $G_B/(G_A+G_B)$. Hence $\theta_\varepsilon=G_B/(G_A+G_B)+o(1)$. If $H_A=H_B$, then both endpoints have the same first-order perturbation; hence every segment coefficient gives the same first-order perturbation, and no coefficient limit is needed in the induction step.

In both cases the parent density is
\begin{align*}
  p+\varepsilon\frac{G_AH_A+G_BH_B}{G_A+G_B}+o(\varepsilon)
  &=p+\varepsilon\frac{\sum_{i\in A\cup B}\sqrt{w_i}h_i}{G_A+G_B}+o(\varepsilon)\\
  &=p+\varepsilon H_{A\cup B}+o(\varepsilon).
\end{align*}
The propagated-weight rule gives
\begin{equation*}
  (\sqrt{G_A^2}+\sqrt{G_B^2})^2=(G_A+G_B)^2=G_{A\cup B}^2.
\end{equation*}
The induction proves the root claim, and the root is independent of the tree because both $G_{\{1,\ldots,m\}}$ and $H_{\{1,\ldots,m\}}$ are determined only by the leaves.

\subsection*{Proof of Proposition~\ref{prop:global-barycenter-local}}

Let $u_i=h_i/p$ and $v=k/p$. The assumptions imply that all $u_i$ and $v$ are essentially bounded and square-integrable with respect to the measure $p\dd\mu$, and $k$ belongs to the local linear perturbation class specified in Proposition~\ref{prop:global-barycenter-local}. For sufficiently small $\varepsilon$, both $p_i^\varepsilon=p(1+\varepsilon u_i)$ and $q_k^\varepsilon=p(1+\varepsilon v)$ are densities, and the ratios remain in a compact neighborhood of $1$.

For each $i$,
\begin{equation*}
  \frac{p_i^\varepsilon}{q_k^\varepsilon}
  =\frac{1+\varepsilon u_i}{1+\varepsilon v}
  =1+\varepsilon(u_i-v)+O(\varepsilon^2),
\end{equation*}
where the remainder is dominated by a constant times $\varepsilon^2$ on the bounded local class. Taylor expansion at $1$ gives
\begin{align*}
  D_f(p_i^\varepsilon\Vert q_k^\varepsilon)
  &=\int p(1+\varepsilon v)
      f\!\left(\frac{1+\varepsilon u_i}{1+\varepsilon v}\right)\dd\mu\\
  &=\frac{f''(1)}{2}\varepsilon^2
      \int (u_i-v)^2p\dd\mu+o(\varepsilon^2)\\
  &=\frac{f''(1)}{2}\varepsilon^2
      \int_\Omega\frac{(h_i-k)^2}{p}\dd\mu+o(\varepsilon^2).
\end{align*}
Summing over $i$ with weights $w_i$ gives~\eqref{eq:global-bary-local}.

For the limiting objective
\begin{equation*}
  Q(k)=\sum_{i=1}^m w_i\int_\Omega\frac{(h_i-k)^2}{p}\dd\mu,
\end{equation*}
let $\bar h=(\sum_iw_ih_i)/(\sum_iw_i)$. Expanding around $\bar h$ gives
\begin{align*}
  Q(k)
  &=\sum_i w_i\int\frac{(h_i-\bar h+\bar h-k)^2}{p}\dd\mu\\
  &=\sum_i w_i\int\frac{(h_i-\bar h)^2}{p}\dd\mu
    +\left(\sum_iw_i\right)\int\frac{(k-\bar h)^2}{p}\dd\mu,
\end{align*}
where the cross term vanishes because $\sum_iw_i(h_i-\bar h)=0$. The second term is nonnegative and equals zero exactly at $k=\bar h$ within the local linear perturbation space. Hence $\bar h$ is the minimizer there.

\subsection*{Proof of Corollary~\ref{cor:f-incompatibility}}

Assume that a rule satisfies the algebraic requirements listed in the corollary. Theorem~\ref{thm:converse} applies, because the rule has additive output weight, a continuous coefficient depending only on the two weights, and commutative associative execution over all convex density classes containing three affinely independent densities. Therefore, the segment coefficient must be
\begin{equation*}
  \frac{b}{a+b}.
\end{equation*}
Choose $a\neq b$. The equality
\begin{equation*}
  \frac{b}{a+b}=\frac{\sqrt b}{\sqrt a+\sqrt b}
\end{equation*}
would imply $b(\sqrt a+\sqrt b)=(a+b)\sqrt b$, hence $b\sqrt a=a\sqrt b$, and therefore $\sqrt a=\sqrt b$, contradicting $a\neq b$. Theorem~\ref{thm:f-local-square-root} gives sufficiently close endpoint pairs for which every selected $f$-divergence balancing coefficient converges to the square-root value. For small enough $\varepsilon$, that coefficient cannot equal the normalized additive-weight coefficient required by Theorem~\ref{thm:converse}. Hence the listed requirements cannot all hold.

\subsection*{Verification of the KL witness}

For a Bernoulli density $p_x$ on $\{0,1\}$, the KL divergence to $p_z$ is
\begin{equation*}
  \KL(p_x\Vert p_z)=x\log\frac{x}{z}+(1-x)\log\frac{1-x}{1-z}.
\end{equation*}
For fixed endpoints $x<y$ and positive weights $w_a,w_b$, the scalar balance function is
\begin{equation*}
  F(z)=w_a\KL(p_x\Vert p_z)-w_b\KL(p_y\Vert p_z)
  \quad\text{for}\quad z\in(x,y).
\end{equation*}
Differentiating gives
\begin{align*}
  F'(z)
  &=w_a\left(-\frac{x}{z}+\frac{1-x}{1-z}\right)
    -w_b\left(-\frac{y}{z}+\frac{1-y}{1-z}\right)\\
  &=\frac{(w_a-w_b)z-w_ax+w_by}{z(1-z)}.
\end{align*}
On each numerical interval used in the main text, the numerator is positive, and the denominator is positive because $z\in(0,1)$. Thus, the root is unique. Solving $F(z)=0$ by one-dimensional bisection or Newton iteration gives
\begin{equation*}
  (1,p_{0.1})\oplus_{\KL}(1,p_{0.6})=(2,p_{0.332731})
\end{equation*}
and
\begin{equation*}
  (1,p_{0.6})\oplus_{\KL}(1,p_{0.9})=(2,p_{0.761290}).
\end{equation*}
The second fusion step then gives
\begin{equation*}
  ((1,p_{0.1})\oplus_{\KL}(1,p_{0.6}))\oplus_{\KL}(1,p_{0.9})=(3,p_{0.581166})
\end{equation*}
and
\begin{equation*}
  (1,p_{0.1})\oplus_{\KL}((1,p_{0.6})\oplus_{\KL}(1,p_{0.9}))=(3,p_{0.478333}).
\end{equation*}
The final parameters differ by approximately $0.102833$, so the induced binary operation is not associative.

\subsection*{Proof of Proposition~\ref{prop:gmm-exact-closure}}

For each input mixture,
\begin{equation*}
  p_i(x)=\sum_{r=1}^{m_i}\alpha_{ir}\N(x;\mu_{ir},\Sigma_{ir}),
  \qquad
  \alpha_{ir}\ge0
  \quad\text{and}\quad
  \sum_{r=1}^{m_i}\alpha_{ir}=1.
\end{equation*}
The exact weighted fusion is
\begin{align*}
  \frac{\sum_iw_ip_i(x)}{\sum_jw_j}
  &=\frac{1}{\sum_jw_j}\sum_iw_i\sum_{r=1}^{m_i}\alpha_{ir}\N(x;\mu_{ir},\Sigma_{ir})\\
  &=\sum_i\sum_{r=1}^{m_i}\frac{w_i\alpha_{ir}}{\sum_jw_j}\N(x;\mu_{ir},\Sigma_{ir}).
\end{align*}
All coefficients in this representation are nonnegative. Their sum is
\begin{equation*}
  \sum_i\sum_{r=1}^{m_i}\frac{w_i\alpha_{ir}}{\sum_jw_j}
  =\frac{\sum_iw_i\sum_r\alpha_{ir}}{\sum_jw_j}
  =1.
\end{equation*}
Thus, the result is a finite Gaussian mixture with at most $\sum_i m_i$ component terms. If every component weight is positive and no two components have the same mean and covariance, no term is removed by zero-coefficient deletion or duplicate merging. The concatenated representation therefore has exactly $\sum_i m_i$ nonzero components.

\subsection*{Proof of Theorem~\ref{thm:compression-congruence}}

We prove the equivalence of the three statements.

Assume first that~\eqref{eq:incremental-compatibility} holds. We show by induction on a binary aggregation tree that every nontrivial subtree with leaf set $I$ evaluates to
\begin{equation*}
  C\!\left(\sum_{i\in I}M_i\right).
\end{equation*}
For two leaves $i,j$, the value is $M_i\odot_C M_j=C(M_i+M_j)$, so the claim holds. Consider a larger subtree with child leaf sets $I$ and $J$. If both children are nontrivial, the induction hypothesis gives the value
\begin{align*}
  C\!\left(C\!\left(\sum_{i\in I}M_i\right)+C\!\left(\sum_{j\in J}M_j\right)\right).
\end{align*}
Applying~\eqref{eq:incremental-compatibility} once with $M=\sum_{i\in I}M_i$ and $N=C(\sum_{j\in J}M_j)$ gives
\begin{equation*}
  C\!\left(\sum_{i\in I}M_i+C\!\left(\sum_{j\in J}M_j\right)\right).
\end{equation*}
By commutativity of addition and another use of~\eqref{eq:incremental-compatibility}, now with $M=\sum_{j\in J}M_j$ and $N=\sum_{i\in I}M_i$, this equals
\begin{equation*}
  C\!\left(\sum_{i\in I}M_i+\sum_{j\in J}M_j\right).
\end{equation*}
If one child is a leaf, the same argument uses~\eqref{eq:incremental-compatibility} once. Thus, every binary tree evaluates to final compression of the exact sum.

Next assume the tree condition. Since $0$ is the zero component measure in $\mathcal M$, it is an allowed leaf. Apply the tree condition to the three-leaf family $M,0,N$. In the tree that first combines $M$ and $0$, the first internal node gives
\begin{equation*}
  M\odot_C0=C(M+0)=C(M),
\end{equation*}
since $0$ is the additive zero measure. Combining with $N$ gives $C(C(M)+N)$. The tree condition says that every binary tree has the same value as final compression of the exact sum, namely $C(M+0+N)=C(M+N)$. Hence~\eqref{eq:incremental-compatibility} holds.

It remains to show the equivalence of incremental compatibility and congruence. Suppose first that $\sim$ is a congruence. Since $C$ is idempotent, $C(C(M))=C(M)$, so $M\sim C(M)$. Congruence under addition gives $M+N\sim C(M)+N$, which means
\begin{equation*}
  C(M+N)=C(C(M)+N).
\end{equation*}
Thus, incremental compatibility holds. Conversely, assume incremental compatibility and let $M\sim N$. Then $C(M)=C(N)$. For any $R\in\mathcal M$,
\begin{align*}
  C(M+R)&=C(C(M)+R)\\
        &=C(C(N)+R)\\
        &=C(N+R),
\end{align*}
where the first and last equalities use incremental compatibility. Hence $M+R\sim N+R$. Therefore, $\sim$ is a congruence for addition.

\subsection*{Proof of the transformed compression extension}

Let $g:\R_{>0}\to\R_{>0}$ be a strictly increasing bijection. For $M\neq0$, let
\begin{align*}
  \psi_g(M)&=\frac{g(W(M))}{W(M)}M\\
  \psi_g^{-1}(M)&=\frac{g^{-1}(W(M))}{W(M)}M,
\end{align*}
and let $\psi_g(0)=\psi_g^{-1}(0)=0$. Scalar multiplication gives
$W(\psi_g(M))=g(W(M))$ and
$W(\psi_g^{-1}(M))=g^{-1}(W(M))$. Consequently,
\begin{align*}
 \psi_g^{-1}(\psi_g(M))
 &=\frac{g^{-1}(g(W(M)))}{g(W(M))}
   \frac{g(W(M))}{W(M)}M=M,\\
 \psi_g(\psi_g^{-1}(M))
 &=\frac{g(g^{-1}(W(M)))}{g^{-1}(W(M))}
   \frac{g^{-1}(W(M))}{W(M)}M=M.
\end{align*}
Thus, $\psi_g$ is bijective. Let
\begin{equation*}
  M\boxplus_gN=\psi_g^{-1}\!\bigl(\psi_g(M)+\psi_g(N)\bigr).
\end{equation*}
The identity
\begin{equation*}
 \psi_g\bigl((M\boxplus_gN)\boxplus_gR\bigr)
 =\psi_g(M)+\psi_g(N)+\psi_g(R)
 =\psi_g\bigl(M\boxplus_g(N\boxplus_gR)\bigr)
\end{equation*}
and injectivity of $\psi_g$ prove associativity. Commutativity and the identity
$0$ follow from the corresponding properties of addition. Hence,
$(\mathcal M,\boxplus_g,0)$ is a commutative monoid.

Let $M\odot_C^gN=C(M\boxplus_gN)$. Assume the incremental condition
\begin{equation*}
 C\bigl(C(M)\boxplus_gN\bigr)=C(M\boxplus_gN).
\end{equation*}
For the exact aggregates $A$ and $B$ of two child subtrees,
\begin{align*}
 C\bigl(C(A)\boxplus_gC(B)\bigr)
 &=C\bigl(A\boxplus_gC(B)\bigr)\\
 &=C\bigl(C(B)\boxplus_gA\bigr)\\
 &=C(B\boxplus_gA)=C(A\boxplus_gB).
\end{align*}
Induction on the tree therefore gives final value
$C(M_1\boxplus_g\cdots\boxplus_gM_n)$ for every parenthesization. Conversely, applying the
tree condition to the leaves $M,0,N$ gives
\begin{equation*}
 C\bigl(C(M)\boxplus_gN\bigr)
 =C\bigl((M\boxplus_g0)\boxplus_gN\bigr)
 =C(M\boxplus_gN),
\end{equation*}
which is incremental compatibility.

If $\sim$ is a congruence, idempotence gives $M\sim C(M)$ and hence
$M\boxplus_gN\sim C(M)\boxplus_gN$, which is incremental compatibility.
Conversely, if incremental compatibility holds and $C(M)=C(N)$, then
\begin{align*}
 C(M\boxplus_gR)
 &=C(C(M)\boxplus_gR)\\
 &=C(C(N)\boxplus_gR)\\
 &=C(N\boxplus_gR).
\end{align*}
Hence, $M\boxplus_gR\sim N\boxplus_gR$, proving the congruence equivalence.

\subsection*{Verification of the pruning example}

Let $C$ retain the component with largest normalized coefficient, equivalently largest unnormalized component mass within a fixed total weight, preserve the total state weight, and break coefficient ties by choosing the smaller mean. For
\begin{equation*}
  x=(1,\N(0,1)),\qquad y=(1,\N(1,1))\quad\text{and}\quad z=(1,\N(2,1)),
\end{equation*}
exact fusion of $x$ and $y$ gives a two-component mixture with total weight $2$ and equal component coefficients. The tie rule keeps the smaller mean, so $C(x\oplus y)=(2,\N(0,1))$. Similarly, $C(y\oplus z)=(2,\N(1,1))$. Fusing the first compressed state with $z$ gives unnormalized component masses $2$ for $\N(0,1)$ and $1$ for $\N(2,1)$, so the largest-mass rule keeps $\N(0,1)$ and
\begin{equation*}
  (x\oplus_Cy)\oplus_Cz=(3,\N(0,1)).
\end{equation*}
Fusing $x$ with the second compressed state gives unnormalized component masses $1$ for $\N(0,1)$ and $2$ for $\N(1,1)$, so the largest-mass rule keeps $\N(1,1)$ and
\begin{equation*}
  x\oplus_C(y\oplus_Cz)=(3,\N(1,1)).
\end{equation*}
The two final states differ, so local pruning is not associative. The example is only a witness for tree dependence; it is not a claim about the statistical quality of pruning.

\end{document}